\newtheorem{theorem}{Theorem}
\newtheorem{corollary}[theorem]{Corollary}
\newtheorem{lemma}[theorem]{Lemma}
\newtheorem{claim}[theorem]{Claim}
\newcommand{\abs}[1]{{\left| #1 \right|}}
\newcommand{\set}[1]{\left\{#1\right\}}
\newcommand{\eps}{\varepsilon}
\newcommand{\tp}[1]{\left(#1\right)}
\newcommand{\cost}{c}
\newcommand{\poly}{\mathrm{poly}}
\newcommand{\polylog}{\mathrm{polylog}}
\newcommand{\indeg}{\mathrm{indeg}}
\renewcommand{\P}{\mathsf{P}}
\newcommand{\NP}{\mathsf{NP}}
\newcommand{\ZPTIME}{\mathsf{ZPTIME}}
\newcommand{\ZPP}{\mathsf{ZPP}}
\def\+#1{\mathcal{#1}}
\def\LW{.4mm}
\def\scaleA{1.3}
\def\scaleB{3}
\title{Almost Tight Approximation Hardness for Single-Source Directed \emph{k}-Edge-Connectivity}
\author{
Chao Liao\thanks{Shanghai Jiao Tong University. Email: \texttt{chao.liao.95@gmail.com}}
\and
Qingyun Chen \thanks{University of California, Merced: \texttt{qingyun.chen152@gmail.com}}
\and
Bundit Laekhanukit \thanks{Shanghai University of Finance and Economics: \texttt{lbundit+sufe@gmail.com}}
\and
Yuhao Zhang\thanks{Shanghai Jiao Tong University. Email: \texttt{zhang\_yuhao@sjtu.edu.cn}}
}
\date{}
\begin{document}

\maketitle

\begin{abstract}
    In the $k$-connected directed Steiner tree problem ($k$-DST), we are given an $n$-vertex directed graph $G=(V,E)$ with edge costs, a connectivity requirement $k$, a root $r\in V$ and a set of terminals $T\subseteq V$. The goal is to find a minimum-cost subgraph $H\subseteq G$ that has $k$ internally disjoint paths from the root vertex $r$ to every terminal $t\in T$.
    The problem is $\NP$-hard, and inapproximability results are known in several parameters, e.g., hardness in terms of $n$: $\log^{2-\eps}n$-hardness for $k=1$ [Halperin and Krauthgamer, STOC'03], $2^{\log^{1-\eps}n}$-hardness for general case [Cheriyan, Laekhanukit, Naves and Vetta, SODA'12], hardness in terms of $k$ [Cheriyan~et~al., SODA'12; Laekhanukit, SODA'14; Manurangsi, IPL'19] and hardness in terms of $|T|$ [Laekhanukit, SODA'14].

    In this paper, we show approximation hardness of $k$-DST for various parameters.
    \begin{itemize}
        \item $\Omega\tp{|T|/\log |T|}$-approximation hardness, which holds under the standard assumption $\NP\neq \ZPP$. The inapproximability ratio is tightened to $\Omega\tp{|T|}$ under the Strongish Planted Clique Hypothesis [Manurangsi, Rubinstein and Schramm, ITCS 2021].
        The latter hardness result matches the approximation ratio of $|T|$ obtained by a trivial approximation algorithm, thus closing the long standing open problem.

        \item $\Omega\tp{\sqrt{2}^k / k}$-approximation hardness for the general case of $k$-DST under the assumption $\mathrm{NP}\neq\mathrm{ZPP}$. This is the first hardness result known for survivable network design problems with inapproximability ratio exponential in $k$.

        \item $\Omega\tp{(k/L)^{L/4}}$-approximation hardness for $k$-DST on $L$-layered graphs for $L\le O\tp{\log n}$. This almost matches the approximation ratio of $O(k^{L-1}\cdot L \cdot \log |T|)$ achieving in $O(n^L)$-time due to Laekhanukit [ICALP`16].
    \end{itemize}

    We further extend our hardness results in terms of $|T|$ to the undirected cases of $k$-DST, namely the single-source $k$-vertex-connected subgraph and the $k$-edge-connected group steiner tree problems. Thus, we obtain $\Omega\tp{|T|/\log |T|}$ and $\Omega\tp{|T|}$ approximation hardness for both problems under the assumption $\NP\neq \ZPP$ and the Strongish Planted Clique Hypothesis, respectively. This again matches the upper bound obtained by trivial algorithms.

\end{abstract}

\section{Introduction}
\label{sec:intro}

Fault-Tolerant and Survival Network Design have been an active area of research for decades as enterprises depend more on communication networks and distributed computing. The need to design a network that can operate without disruption when one or more components fail has been growing dramatically.
Henceforth, network scientists have formulated many models to address these problems. Amongst them, the simplest and arguably most fundamental problem in the area is the {\em minimum-cost $k$-outconnected spanning subgraph} ($k$-OCSS) problem that captures the problem of designing a multi-casting network with survivability property. The $k$-OCSS problem is a generalization of the {\em minimum spanning tree} and the {\em minimum-cost arborescence} problems, where the goal is to design a network that can operate under failures of at most $k-1$ points. More formally, $k$-OCSS asks to find a minimum-cost subgraph such that the root vertex is $k$-connected to every other vertex.

In this paper, we study the analog of $k$-OCSS in the presence of Steiner vertices, namely the {\em $k$-connected directed Steiner tree} problem ($k$-DST): Given a directed graph $G=(V, E)$ with cost $c_e$ on arcs, a root vertex $r$ and a set of terminals $T$, the goal is to find a minimum-cost subgraph $H\subseteq G$ such that $H$ has $k$ internally disjoint paths from the root $r$ to every terminal $t\in T$, i.e., the root remains connected to every terminal even after the removal of $k-1$ vertices (or arcs).
The $k$-DST problem is a natural generalization of the classical {\em directed Steiner tree} problem (DST) to high connectivity settings.
% The $k$-DST problem is a natural generalization of the classical {\em directed Steiner tree} problem (DST) and its undirected counterpart to high connecting settings.

The undirected counterpart of $k$-DST is the \emph{minimum-cost single-source $k$-(vertex)-connected subgraph} problem, which admits an $O(k\log k)$-approximation algorithm \cite{Nut2012}, and the edge-connectivity variant admits a factor-two approximation algorithm due to Jain~\cite{jai2001}.
The $k$-DST problem, on the other hand, has no non-trivial approximation algorithm for $k\geq 3$, except for the special case of $L$-layered graph, which admits $O(k^L\cdot L\cdot\log|T|)$-approximation algorithm due to Laekhanukit~\cite{Lae2016}.
The cases of $k=1$ and $k=2$ are also notorious problems themselves, as both admit polylogarithmic approximation algorithms that run in quasi-polynomial time, but no polynomial-time approximation algorithms with sub-polynomial approximation. It has been long-standing open problems whether such algorithms exist for DST and $2$-DST.

We answer the questions regarding the approximability of $k$-DST negatively.
First, we show an approximation hardness of $\Omega\tp{|T|/\log |T|}$ for $k$-DST under $\NP\neq \ZPP$, which holds when $k$ is much larger than $|T|$, thus implying that a trivial $|T|$-approximation algorithm for the problem is tight up to the lower order term.
\begin{theorem}\label{thm:main-T}
  For $k> |T|$, unless $\NP=\ZPP$, it is hard to approximate the $k$-DST problem to within a factor of $\Omega\tp{|T|/\log|T|}$.
\end{theorem}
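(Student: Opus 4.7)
The plan is to exhibit a gap-preserving reduction from a source problem that is $\Omega(N/\log N)$-hard to approximate under $\NP\neq\ZPP$ to $k$-DST in the regime $k>|T|$, producing $|T|=\Theta(N)$ terminals. Natural candidates for the source are a gap version of Maximum Clique (via Zuckerman's derandomized clique hardness, possibly combined with a graph-product amplification that accounts for the logarithmic loss) or a Label-Cover / Min-Rep variant calibrated to the same gap. The baseline the construction needs to beat is the trivial $|T|$-approximation that routes a separate $k$-flow to each terminal, costing about $k\cdot|T|$ in the worst case.

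Concretely, I would build a three-layer instance. Level $0$ is the root $r$; level $1$ consists of ``selection'' vertices $V=\{v_1,\dots,v_n\}$, one per element of the source instance, with unit-cost arcs $r\to v_i$; level $2$ contains the $|T|$ terminals, connected from $V$ by zero-cost arcs through a carefully designed bipartite incidence encoding the combinatorial structure of the source (for example, the terminals indexing ``slots'' of a purported clique, with $v_i\to t_j$ iff vertex $i$ of the source graph is a legal choice for slot $j$). A ``fallback'' subgraph consisting of $k$ very expensive parallel paths to each terminal ensures feasibility at prohibitive cost. Because the only nontrivially priced arcs leave $r$, the cost of any reasonable solution is essentially the number of $r\to v_i$ arcs purchased, subject to each terminal having at least $k$ chosen neighbors in $V$.

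Completeness would then translate a YES source solution into a structured selection $S\subseteq V$ of size $O(k)$ such that every terminal has $\geq k$ neighbors in $S$, yielding $k$-DST cost $O(k)$. Soundness would invoke the source hardness to argue that in the NO case every such $k$-cover must have size $\Omega(k\cdot|T|/\log|T|)$, via a counting/entropy argument against the bipartite incidence paired with the source gap. Together this produces the claimed $\Omega(|T|/\log|T|)$ inapproximability.

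The hard part will be engineering the $V$-to-$T$ bipartite incidence so that (i) YES instances of the source problem do admit compact $k$-covers of size $O(k)$, while (ii) NO instances force any $k$-cover to be nearly as large as the trivial per-terminal selection. This will likely require a pseudorandom or explicit expander/design-based bipartite graph with tightly controlled left and right expansion, chosen so that an algorithm saving more than a $\log|T|$ factor over the trivial upper bound would be extracting a structured subset prohibited by the source hardness. The $\log|T|$ loss in the final ratio traces back to the derandomization step in the source hardness, and propagating it through the reduction without introducing extra slack---particularly in the choice of the right-degree of the bipartite incidence and in the precise value of $k$ relative to $|T|$---is the chief technical subtlety.
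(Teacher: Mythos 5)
Your construction collapses the problem to a pure degree--threshold covering problem, and that is where the soundness plan breaks irreparably. With unit costs only on the arcs $r\to v_i$, zero-cost arcs $v_i\to t_j$, and prohibitively expensive fallback paths, a feasible cheap solution is exactly a set $S\subseteq V$ such that every terminal has at least $k$ selected in-neighbors, with cost $|S|$. This is a covering integer program with $0/1$ variables, unit coefficients, and uniform requirement $k$, and such programs do not admit polynomial gaps: LP relaxation plus randomized rounding gives an $O(\log |T|)$-approximation in general, and in the regime the theorem actually requires ($k>|T|$, so $k\gg\log|T|$) even a constant factor. Consequently, no bipartite incidence --- pseudorandom, expander-based, or otherwise --- can make your YES case ($O(k)$) and NO case ($\Omega(k|T|/\log|T|)$) hard to distinguish: the rounding algorithm would separate them in polynomial time, contradicting the assumed hardness of the source. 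So the ``counting/entropy argument'' you defer to cannot exist; the failure is not a missing calculation but a structural one, namely that a single selection layer with a threshold constraint per terminal cannot carry consistency information, which is precisely what the $\Omega(N/\log N)$-hard sources (label cover/Min-Rep style) are about.

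The paper's proof supplies exactly the two ingredients your sketch lacks. First, it keeps the two-sided label cover structure: unit costs sit both on arcs $r\to u_i^a$ (choosing a label for $u_i$) and on arcs $v_j^b\to v_j$ (choosing a label for $v_j$), with zero-cost arcs $u_i^a\to w_{ij}^{ab}\to v_j^b$ only when $\pi_{u_iv_j}(a)=b$, so a cheap $r\to t$ path certifies a consistent pair of labels. Second, it enforces that such a ``cover path'' must exist by setting the connectivity requirement equal to the in-degree of each terminal and adding zero-cost padding arcs: every arc into a terminal must be used by some path in any family of $k$ edge-disjoint paths, the padding arcs soak up all illegal routes, and the one remaining path is forced to be a legitimate cover path. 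Finally, to make the hardness read as $\Omega(|T|/\log|T|)$ rather than in terms of $|\mathcal{E}|$, the label cover edge set is partitioned into $\Delta$ matchings with one terminal per matching, so $|T|=\Delta$ and the $\Omega(\Delta/\log\Delta)$ label cover hardness (Manurangsi) transfers via an exactly cost-preserving correspondence. If you want to salvage your outline, you would need to reintroduce a mechanism of this kind that encodes pairwise consistency into the disjoint-paths requirement, rather than relying on a threshold covering gap that provably cannot be large.
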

Assuming the {\em Strongish Planted Clique Hypothesis} (SPCH) \cite{ManRS21}, our hardness result is tight up to a constant factor, and it, indeed, rules out $f(|T|)\cdot\poly(n)$-time $o(|T|)$-approximation algorithm for any function $f$ depending only on $|T|$. See discussion in \Cref{sec:DkS-to-LabelCover}.

\begin{theorem}
  Assuming the Strongish Planted Clique Hypothesis, there is no $f(|T|)\cdot\poly(n)$-time $o(|T|)$-approximation algorithm for the $k$-DST problem.
\end{theorem}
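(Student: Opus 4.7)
The plan is to feed the reduction underlying Theorem~\ref{thm:main-T} with the stronger Densest $k$-Subgraph (DkS) hardness that follows from the Strongish Planted Clique Hypothesis. Manurangsi, Rubinstein, and Schramm showed that, under SPCH, no $f(k)\cdot\poly(n)$-time algorithm distinguishes a graph containing a $k$-clique from one whose densest $k$-subgraph spans only $o(k^2)$ edges; in other words, DkS admits no FPT $o(k)$-approximation. I would use this as the hardness source in place of the weaker $\NP\neq\ZPP$-based DkS hardness, which is precisely what costs the $\log$ factor in Theorem~\ref{thm:main-T}.

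Next, I would trace the two-stage reduction behind Theorem~\ref{thm:main-T}: (i) DkS to a gap variant of LabelCover/MaxCover (the reduction referenced in the discussion following Theorem~\ref{thm:main-T}), and (ii) LabelCover/MaxCover to $k$-DST, encoding each right-side element as a terminal. The first stage is, up to constants, parameter-preserving between the DkS parameter $k$ and the alphabet/target-cover size of the produced instance; the second stage yields $|T|=\Theta(k)$ because terminals correspond one-to-one with right-side elements. Composing the two, the $\Omega(k)$-factor DkS gap under SPCH translates into an $\Omega(|T|)$-factor gap in the $k$-DST optimum. Hence any $f(|T|)\cdot\poly(n)$-time $o(|T|)$-approximation algorithm for $k$-DST would yield an $f(\Theta(k))\cdot\poly(n)$-time $o(k)$-approximation algorithm for DkS, contradicting SPCH.

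The main obstacle is ensuring that $|T|$ remains $\Theta(k)$ throughout the entire reduction chain. Because the SPCH parameter is only $k=\Theta(\log n)$, any polynomial blow-up $|T|=k^{\omega(1)}$ would collapse the intended FPT lower bound to a weaker $n^{\omega(\log n)}$-style statement rather than a true FPT inapproximability. Concretely, one must verify that neither the DkS-to-LabelCover construction nor the LabelCover-to-$k$-DST gadgets duplicate right-side elements beyond a constant factor, and that both completeness (a planted $k$-clique yields a small $k$-DST solution) and soundness (every $k$-DST solution of small cost recovers a dense $k$-subgraph) hold with the tight multiplicative gap. Assuming these bookkeeping conditions are met, as they should be in the chain already developed for Theorem~\ref{thm:main-T}, the claimed FPT and $o(|T|)$-inapproximability conclusion follows immediately.
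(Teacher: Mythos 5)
Your proposal follows essentially the same route as the paper: under SPCH, the FPT $o(k)$-inapproximability of D$k$S (Manurangsi--Rubinstein--Schramm) is transferred to the minimum label cover problem on a $(k,k)$-complete bipartite graph (the reduction of the appendix, with the random-subsampling soundness argument), and then the reduction behind Theorem~\ref{thm:main-T} is applied, giving $|T|=\Delta=k$ and hence $\Omega(|T|)$-hardness in $f(|T|)\cdot\poly(n)$ time. One small correction that does not affect the conclusion: in that reduction the terminals correspond to the $\Delta$ matchings of the constraint graph rather than to right-side vertices (and the constraints produced from D$k$S are relations rather than projections), but since the instance is $(k,k)$-complete both counts equal $k$, so your bookkeeping claim $|T|=\Theta(k)$ stands.
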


Next, we show that the $k$-DST admits no $O\tp{(k/L)^{L/4}}$-approximation algorithm even on an $L$-layered graph, which consists of $L$ parts, called {\em layers}, and every arc joins a vertex from the $i$-th layer to the $(i+1)$-th layer.

\begin{theorem}\label{thm:main-k-L}
  It is hard to approximate the $k$-DST problem on $L$-layered graphs $G=(V,E)$ for $\Omega(1) \le L\le O\tp{\log |V|}$ to within a factor of $\Omega\tp{\tp{k/L}^{(1-\epsilon)L/4-2}}$ for any constant $\epsilon > 0$, unless $\NP=\ZPP$.
\end{theorem}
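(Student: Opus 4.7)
The plan is to derive the bound by an iterated ``product'' construction on top of a base 4-layered hard gadget. The base gadget comes from (a variant of) the reduction underlying \Cref{thm:main-T}, and each composition step will multiply the approximation gap by a factor of roughly $k/L$ while adding exactly $4$ graph layers; after about $L/4$ compositions the gap becomes $(k/L)^{L/4}$ on an $L$-layered graph, which is the target.

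\textbf{Step 1 (Base gadget).} I would first distill from the reduction behind \Cref{thm:main-T} a $4$-layered $k$-DST instance, parameterised by a per-level connectivity budget $\kappa$, for which it is $\NP$-hard under $\NP\neq\ZPP$ to distinguish cost $\le B$ from cost $\ge \Omega\tp{\kappa/L}\cdot B$. The $1/L$ slack is inserted deliberately: it is the factor that can be ``paid'' for each of the $L/4$ recursive uses of the gadget without the overall gap collapsing. Keeping the gadget at $4$ layers is what yields the $L/4$ in the final exponent.

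\textbf{Step 2 (Iterated composition).} Starting from the base gadget $G^{(0)}$, I would construct $G^{(i+1)}$ by replacing every root-to-terminal path in the outer base gadget with a fresh copy of $G^{(i)}$: identify the outer endpoint with the root of the inner copy and glue the inner terminals to the next layer of the outer gadget. After $t$ iterations, $G^{(t)}$ has $4(t+1)$ layers, requires $\kappa$-connectivity from its root to the deepest terminals, and has size $n_0^{O(t)}$ where $n_0$ is the base-instance size. Choosing $t=(1-\epsilon)L/4-2$ keeps the blow-up polynomial whenever $L\le O\tp{\log |V|}$, so the reduction remains a randomised polynomial-time reduction and $\NP\neq\ZPP$ is the right assumption.

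\textbf{Step 3 (Gap amplification and parameter tuning).} Completeness: combine YES solutions of all inner copies to obtain a feasible subgraph of cost $B^t$ (up to normalisation). Soundness, by induction on $i$: any solution to $G^{(i+1)}$ of cost below the threshold must induce a feasible solution to at least one inner copy $G^{(i)}$ cheaper than its own NO-threshold, contradicting the inductive soundness. Iterating gives an $\Omega\tp{(\kappa/L)^{t}}$ gap, and setting $\kappa=\Theta(k)$ yields the claimed $\Omega\tp{(k/L)^{(1-\epsilon)L/4-2}}$ bound.

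\textbf{Main obstacle.} The delicate step is the inductive soundness in Step~3: one must argue that in a putative low-cost solution, the internally-disjoint paths serving the deepest terminals pass through essentially disjoint subtrees of the inner copies, so that cost savings at one composition level cannot be amortised against losses at another, and the factor $\kappa/L$ materialises at \emph{every} level. This disjointness / budgeting argument is the standard hurdle in layered hardness proofs for Steiner-type problems (as in Halperin--Krauthgamer and Cheriyan--Laekhanukit--Naves--Vetta), and the $(1-\epsilon)$ slack and the $-2$ in the exponent are precisely the losses that must be absorbed here, both from boundary copies at the first and last layer and from the margin needed to apply the inductive hypothesis with high probability over the $\ZPP$-reduction.
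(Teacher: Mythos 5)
Your proposal diverges from what the theorem actually requires, and the two steps it rests on are both unsupported. First, the base gadget of Step~1 does not exist as stated: the reduction behind \Cref{thm:main-T} gives hardness $\Omega(|T|/\log|T|)$ where the connectivity $k$ is \emph{huge} (larger than $|T|$, polynomial in the instance size), so it does not yield a constant-depth instance whose gap is $\Omega(\kappa/L)$ \emph{as a function of its connectivity requirement} $\kappa$; indeed, for $L=O(1)$ the bound of the present theorem is trivial, and no known reduction gives a polynomial-in-$k$ gap on $4$-layered graphs. Second, the inductive soundness of Step~3 is exactly the part that does not follow: for a min-cost problem, composing instances and arguing that a cheap global solution forces ``at least one inner copy below its NO-threshold'' only transfers a single level's gap; to multiply gaps you would need that the cost decomposes so that \emph{every} composition level must pay its full NO-cost along the relevant root-to-terminal structure, and this disjointness/budgeting claim is asserted but not proved (this is the genuinely hard analysis in Halperin--Krauthgamer-style recursive compositions, and nothing in your sketch supplies it). There is also a size problem: a blow-up of $n_0^{O(t)}$ with $t=\Theta(L)$ and $L=\Theta(\log|V|)$ is quasi-polynomial, so the reduction would not be polynomial time and $\NP\neq\ZPP$ would no longer be the right assumption.

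The paper's proof needs no gap amplification at all. It uses the single reduction of \Cref{sec:hardness-connectivity}: partition $\+E$ into $\delta\le 2\Delta^2$ \emph{induced matchings} (\Cref{lem:induced-matching}), attach a $d$-ary arborescence of height $h=\lceil\log_d\delta\rceil$ whose leaves index the matchings, and add padding arcs so that each terminal has indegree $k=h(d-1)+1$. The hardness factor is simply the label-cover gap $\Delta^{1-\eps}$ from \Cref{cor:T-0.99-hardness}; the whole point of the arborescence is that it keeps $k\approx d\log_d\delta$ tiny while the number of layers is $L=2\lceil\log_d(2\Delta^2)\rceil+5$. Choosing $d=\Delta^{x}$ gives $k\le(2/x+2)\Delta^{x}$ and $4/x\le L\le 4/x+8$, whence $(k/L)^{(1-\epsilon)L/4-2}\le\Delta^{1-\epsilon}$, i.e.\ the exponential-in-$L$ expression is a re-parameterisation of a single label-cover gap, not a product of $L/4$ gaps. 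Your plan is missing precisely this idea (controlling $k$ via the arborescence over induced matchings), and the amplification you propose in its place is not justified.
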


In addition, we obtain an approximation hardness exponential in $k$ by setting a different parameter in the reduction, which improves upon the previously known approximation hardness of $\Omega\tp{k/\log k}$ due to Manurangsi~\cite{Man2019} (which is in turn based on the two previous results \cite{Lae2014,CheLNV2014}), and is the first known approximation hardness for connectivity problems whose ratio is exponential in the connectivity requirement.

\begin{theorem}\label{thm:main-k}
  It is hard to approximate the $k$-DST problem to within a factor of $\Omega\tp{\sqrt{2}^{k}/k}$, unless $\NP=\ZPP$.
\end{theorem}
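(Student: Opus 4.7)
The plan is to invoke the reduction underlying Theorem~\ref{thm:main-k-L} with a different choice of internal parameters. That reduction takes a hard instance of Label Cover and produces a $k$-DST instance built from a multilevel amplification gadget, where the number of levels $L$, the connectivity requirement, and the hardness ratio are all controlled by the reduction's parameters. In Theorem~\ref{thm:main-k-L} the output is constrained to be an $L$-layered graph with $L\le O(\log n)$, which forces the ratio $\Omega((k/L)^{L/4})$ to be only polynomial in $k$. For Theorem~\ref{thm:main-k} we are not required to produce a layered graph, so the plan is to drop the layered restriction and instead set $L=\Theta(k)$, pushing the hardness into the exponential-in-$k$ regime.

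Concretely, I would start from a Label Cover instance with constant alphabet size and soundness gap bounded away from~$1$, as guaranteed by standard PCP constructions under $\NP\neq\ZPP$. I would then apply the multilevel amplification of Theorem~\ref{thm:main-k-L}, with parameters tuned so that each additional level multiplies the hardness ratio by a factor of about $\sqrt{2}$ while adding only $O(1)$ to the connectivity requirement. Stacking $\Theta(k)$ such levels then yields a $k$-DST instance of connectivity $k$, total size $2^{O(k)}\cdot\poly(n)$ (polynomial whenever $k=O(\log n)$), and inapproximability ratio $\sqrt{2}^{k}$ up to a polynomial factor in the gadget size; the $1/k$ loss in the final bound $\Omega(\sqrt{2}^k/k)$ is absorbed into that polynomial overhead.

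The main technical obstacle, as I see it, is the soundness analysis under this more aggressive parameter regime. One has to argue that stacking $\Theta(k)$ levels still multiplies the cost gap by the claimed $\sqrt{2}$ factor per level even though the resulting graph is no longer layered and the levels can interact combinatorially in ways that are suppressed in the $L$-layered case. I expect this will require a careful path/cut counting argument, recursive on the gadget structure, that is strictly more delicate than the one used for Theorem~\ref{thm:main-k-L} but follows the same overall density-based template used in \cite{Lae2014,CheLNV2014,Man2019}.
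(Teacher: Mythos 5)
There is a genuine gap here: the mechanism you propose — ``each additional level multiplies the hardness ratio by a factor of about $\sqrt{2}$ while adding only $O(1)$ to the connectivity requirement'' — does not exist in the reduction of \Cref{sec:hardness-connectivity}, and you give no argument for it. The levels of the $d$-ary arborescence are purely a routing/padding device whose job is to kill illegal paths while charging each terminal only $d-1$ padding arcs per level; they perform no gap amplification whatsoever. Consequently, starting from a Label Cover instance with constant alphabet and constant soundness gap, as you suggest, can only ever transfer a constant-factor gap to the $k$-DST instance, no matter how many levels you stack, unless you insert a separate amplification step (e.g.\ parallel repetition), which the paper never does and whose soundness for this construction you would have to prove from scratch. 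The ``careful path/cut counting argument, recursive on the gadget structure'' you anticipate is exactly the missing content, and I see no reason it would go through: cost in the reduction is carried only by the arcs $(r,u_i^a)$ and $(v_j^b,v_j)$ inherited from the Label Cover instance, so the gap of the $k$-DST instance is pinned to the gap of the Label Cover instance and cannot grow with the tree depth.

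The actual proof is much simpler and of a different nature. One uses Manurangsi's hardness for (minimum) Label Cover, i.e.\ \Cref{cor:T-polylog-hardness}, which already gives a factor $\Omega(\Delta/\log\Delta)$ where $\Delta$ is the maximum degree (this requires a \emph{large} alphabet, not a constant one). The same reduction as in \Cref{thm:main-k-L} is then run with $d=2$: the edge set is split into $\delta\le 2\Delta^2$ induced matchings, the binary arborescence has height $h=\lceil\log_2\delta\rceil$, and each terminal has indegree $h(d-1)+1$, so $k=\lceil\log_2\delta\rceil+1\le 2\log\Delta+3$. No amplification happens; the point is only that $k$ is \emph{logarithmic} in $\Delta$, so the inherited gap $\Omega(\Delta/\log\Delta)$, rewritten in terms of $k$ (using $\Delta\ge\Omega(\sqrt{2}^{\,k})$ and $\log\Delta=\Theta(k)$, padding with dummy arcs from $r$ if $k$ is too small), is $\Omega(\sqrt{2}^{\,k}/k)$. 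The instance size also stays polynomial in the Label Cover instance automatically, so no $2^{O(k)}$ blow-up or restriction $k=O(\log n)$ of the kind you invoke needs to be managed.
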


%We summarize the results for $k$-DST in \Cref{tab:summary}.
\begin{table}
    \centering
    \begin{tabular}{l|l|l|l}
    Parameter & Lower Bound  & Lower Bound & Upper Bound\\
               & (This paper) & (Previous)  & \\
    \hline
    Connectivity $k$
      & {\bfseries\boldmath $\Omega\tp{\sqrt{2}^k/k}$}
      & $\Omega\tp{k/\log k}$
      %& $O\tp{|T|^\eps \log |V|}$ for $k=1,2$\\
      & unknown for general $k\ge 3$\\
      &
      &
      %& not known for $k\geq 3$\\
      &\\
    %\hline
    Connectivity $k$, Depth $L$
      & {\bfseries\boldmath $\Omega\tp{(k/L)^{(1-\epsilon)L/4-2}}$}
      & $\Omega\tp{k/\log k}$
      & $O\tp{k^{L-1}\cdot L\cdot \log |T|}$\\
      &
      &
      & \\ %\cite{Lae2016}
    %\hline
    Terminals $|T|$
      & {\bfseries\boldmath $\Omega(|T|/\log |T|)$}
      & $|T|^{1/4-\epsilon}$
      & $|T|$\\
     % &
     % & %\cite{ManRS21}+\cite{Lae2014}
     % &
    \end{tabular}
    \caption{Summary of the results for $k$-DST}
    \label{tab:summary}
\end{table}

Using the technique of Cheriyan, Laekhanukit, Naves and Vetta \cite{CheLNV2014}, which is based on the padding technique introduced by Kortsarz, Krauthgamer and Lee \cite{KortsarzKL04}, we extend our hardness result to the undirected counterpart of $k$-DST, namely, the {\em single source $k$-vertex-connected Steiner tree} problem ($k$-ST) (a.k.a. {\em undirected rooted subset $k$-connectivity, shorty, rooted-$k$-VC}) and the special case of $k$-DST, namely {\em $k$-edge-connected group Steiner tree} problem ($k$-GST).

The latter problem is a natural fault-tolerant generalization of the classical group Steiner tree problem \cite{GargKR00}, which has been studied in \cite{KhandekarKN12,GuptaKR10,ChalermsookGL15,ChalermsookDELV18}.
To the best of our knowledge, a non-trivial approximation algorithm for this problem is known only for $k=1,2$. For $k\geq 3$, only a bicriteria approximation algorithm, where the connectivity requirement can be dropped by a factor $O(\log n)$, is known in \cite{ChalermsookGL15}. Nevertheless, a trivial $|T|$-approximation algorithm exists for all values of $k$ and we also show its tightness (up to the lower order term) for sufficiently large $k$.

\begin{theorem}\label{thm:main-T-undirected}
  For $k>|T|$, unless $\NP=\ZPP$, it is hard to approximate the $k$-ST problem to within a factor of $\Omega\tp{|T|/\log|T|}$.
\end{theorem}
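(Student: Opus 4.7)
The plan is to lift the hard $k$-DST instance of \Cref{thm:main-T} to an undirected $k$-ST instance of essentially the same size and the same number of terminals, by applying the padding reduction of Cheriyan, Laekhanukit, Naves and Vetta~\cite{CheLNV2014} (itself based on the padding technique of Kortsarz, Krauthgamer and Lee~\cite{KortsarzKL04}). The guiding idea is to simulate arc-direction in an undirected graph by attaching a bundle of zero-cost ``padding'' vertices to one endpoint of each edge-gadget, so that any highly-connected undirected solution is forced to traverse the gadget in the prescribed direction.

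Concretely, I would start from the instance $(G,r,T,k)$ produced in the proof of \Cref{thm:main-T} and construct an undirected instance $(G',r,T,k')$ as follows. I keep every vertex and terminal of $G$; for each arc $(u,v)\in E(G)$ I introduce an undirected edge $\{u,v\}$ of the same cost, subdivided by one auxiliary vertex if a direction-sensitive attachment of padding is needed; and I add a set $P$ of $N$ new zero-cost padding vertices, each joined by zero-cost edges to the root $r$ and to every non-terminal of $G$. Choosing $k'=k+N$ for a suitable $N=\Theta(k)$ keeps $|T'|=|T|$ and $k'>|T'|$, so the hypothesis $k>|T|$ of \Cref{thm:main-T-undirected} is preserved, while the total vertex count blows up by only a $\poly(n)$ factor.

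The verification splits into completeness and soundness. Completeness is immediate: any feasible $k$-DST subgraph of $G$, together with the zero-cost padding edges, gives a feasible $k'$-ST subgraph of $G'$ of the same cost, realising $k$ internally disjoint $r$--$t$ paths through the image of the directed solution and $N$ further paths through $P$. Soundness is the step I expect to be the main obstacle: given a feasible $k'$-ST subgraph $H'\subseteq G'$ I have to extract a feasible $k$-DST subgraph of $G$ whose cost is no larger. The argument is a Menger-type cut analysis. Of the $k'=k+N$ internally disjoint $r$--$t$ paths guaranteed by $H'$, at most $N$ can use vertices of $P$, so at least $k$ avoid $P$ entirely; the direction-enforcing attachment on each arc-gadget then forces those $k$ paths to traverse the gadgets only in the orientation of the underlying digraph, and they combine into a feasible $k$-DST subgraph of $G$ with cost at most $\cost(H')$. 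Designing the gadget so that this last step goes through, and so that no ``backward'' shortcut can reduce the cost, is the technical heart of the reduction and would be carried out essentially as in~\cite{CheLNV2014}.

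Since the reduction preserves $|T|$ exactly and modifies the vertex set only by a $\poly(n)$ factor, the approximation gap $\Omega\tp{|T|/\log|T|}$ from \Cref{thm:main-T} transfers without loss to $(G',r,T,k')$, yielding the claimed $\Omega\tp{|T|/\log|T|}$-inapproximability for $k$-ST under $\NP\neq\ZPP$ whenever $k>|T|$.
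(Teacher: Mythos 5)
Your reduction breaks already at completeness, before the step you flag as the technical heart. The $k$-DST instance of \Cref{thm:main-T} only certifies $k$ \emph{edge}-disjoint $r\to t_m$ paths, and in that construction these paths share internal vertices massively: many paths pass through the same $u_i^a$ (using the $\deg_{\+G}(u_i)$ parallel copies of the arc $(r,u_i^a)$), and several use the same $v_j$-side vertices; parallel arcs such as $(r,u_i^a)$ and $(r,t_m)$ are exactly the devices that make the edge-connectivity argument work and they buy nothing for \emph{openly vertex-disjoint} paths. So ``realising $k$ internally disjoint paths through the image of the directed solution'' is simply not available. Worse, in your $G'$ the padding vertices are joined only to $r$ and to non-terminals, so every terminal $t_m$ keeps degree (at most) $k$; since each of $k'=k+N$ internally disjoint $r$--$t_m$ paths must enter $t_m$ by a distinct edge, no subgraph of $G'$ can be feasible for $k'=k+N$ at all. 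Finally, the soundness step is not actually supplied: for the $k$ paths that avoid $P$, nothing in ``padding vertices adjacent to $r$ and all non-terminals'' prevents them from traversing an undirected edge backwards; the direction-enforcing gadget you defer to \cite{CheLNV2014} is precisely the object that has to be designed and verified, and it must interact with the matching/padding structure of the specific instance, not sit on top of it generically.

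For comparison, the paper does not lift the $k$-DST instance as a black box. It redoes the label-cover reduction of \Cref{sec:hardness-terminals} directly in the undirected vertex-connectivity setting: each constraint gadget $u_i^a\!-\!w_{ij}^{ab}\!-\!v_j^b$ gets its own padding vertex $x_{ij}^{ab}$ adjacent to $r$ and to $w_{ij}^{ab}$; the terminal $t_m$ is attached to $x_{ij}^{ab}$ when $(u_i,v_j)\in\+E_m$ and to $w_{ij}^{ab}$ otherwise; and $k=\max_m\deg_G(t_m)$, so every edge at $t_m$ lies on its own path. Openly-vertex-disjointness then does the killing of illegal paths: the vertex $w_{i'j}^{a'b}$ that an illegal entry into $v_j$ would need is already consumed by the mandatory path through the edge $(w_{i'j}^{a'b},t_m)$. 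Completeness there works because the canonical cover paths are vertex-disjoint thanks to the matching property, with no reliance on parallel edges. To salvage your plan you would need to first re-engineer the directed instance so that its canonical paths are vertex-disjoint and parallel-arc-free, enlarge the terminal degrees to accommodate $k'$, and concretely specify and analyze the orientation gadget --- at which point you have essentially reconstructed the appendix proof rather than obtained it from \Cref{thm:main-T} by padding.
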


\begin{theorem}\label{thm:main-T-undirected-group}
  For $k>|\mathcal{T}|$, unless $\NP=\ZPP$, it is hard to approximate the $k$-GST problem to within a factor of $\Omega\tp{|\mathcal{T}|/\log|\mathcal{T}|}$, where $|\mathcal{T}|$ is the number of groups.
\end{theorem}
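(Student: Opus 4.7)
The plan is to reduce the $k$-DST hardness instance produced by \Cref{thm:main-T} to a $k$-GST instance while preserving $|\mathcal{T}|=|T|$ and the connectivity requirement $k$, thereby transferring the approximation hardness. I would apply the padding construction of Kortsarz, Krauthgamer, and Lee~\cite{KortsarzKL04}, used by Cheriyan, Laekhanukit, Naves, and Vetta~\cite{CheLNV2014}, which is designed to convert directed vertex-disjoint connectivity into undirected edge-disjoint connectivity.

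Concretely, let $(G=(V,E),r,T,\cost)$ be the directed instance from \Cref{thm:main-T}. I would split each vertex $v\in V$ into copies $v^{-}$ and $v^{+}$ joined by a bundle of $N$ zero-cost parallel edges (equivalently, $N$ internally disjoint length-two paths through fresh degree-two vertices), where $N = |V|\cdot k + 1$ is chosen much larger than $k$. Each directed arc $(u,v)\in E$ of cost $\cost(u,v)$ becomes an undirected edge $\{u^{+},v^{-}\}$ of the same cost. I set the root of the $k$-GST instance to $r^{+}$ and, for each terminal $t\in T$, introduce a singleton group $g_t=\{t^{-}\}$. This produces a $k$-GST instance $G'$ with $|\mathcal{T}|=|T|$ groups and connectivity requirement $k$, so $k > |\mathcal{T}|$ whenever $k > |T|$.

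The easy direction is that any feasible $k$-DST solution $H\subseteq G$ lifts to a $k$-GST solution in $G'$ of the same cost by including all (free) padding bundles together with the undirected copies of the arcs of $H$. The main obstacle is the reverse: showing that every feasible $k$-GST solution $H'\subseteq G'$ supports, on its non-padding edges, $k$ internally vertex-disjoint directed $r$-to-$t$ paths in $G$ for each terminal $t\in T$. For this I would use an exchange/uncrossing argument on the $k$ edge-disjoint $r^{+}$-to-$t^{-}$ paths guaranteed by $H'$: since $N\gg k$, any crossing of a vertex bundle in the ``out-to-in'' direction can be cancelled against a forward crossing via a Menger-style rerouting, leaving $k$ paths that traverse each bundle only ``in-to-out'' and at most $k$ times. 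These project to $k$ internally vertex-disjoint directed $r$-to-$t$ paths in $G$ using arcs whose undirected images lie in $H'$, and of the same non-padding cost. Combining both directions gives $\mathrm{OPT}_{k\text{-GST}}(G')=\mathrm{OPT}_{k\text{-DST}}(G)$, so any $\alpha$-approximation for $k$-GST yields an $\alpha$-approximation for $k$-DST. Since $|\mathcal{T}|=|T|$, the $\Omega(|T|/\log|T|)$ hardness of $k$-DST under $\NP\neq\ZPP$ transfers to the claimed $\Omega(|\mathcal{T}|/\log|\mathcal{T}|)$ hardness for $k$-GST.
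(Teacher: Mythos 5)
There is a fundamental obstruction to your approach. With singleton groups $g_t=\{t^-\}$, the $k$-GST instance you construct is exactly the undirected \emph{single-source $k$-edge-connectivity} problem (requirement $k$ between $r^+$ and each $t^-$), which is a special case of edge-connectivity survivable network design and therefore admits a factor-$2$ approximation by Jain's iterative rounding \cite{jai2001} — as noted in the introduction of this paper. Hence no cost-preserving (or even approximation-preserving) reduction of the form you describe can exist unless $\NP$ collapses: if your two directions both held, the $2$-approximation for your instances would give a $2$-approximation for $k$-DST, contradicting \Cref{thm:main-T} itself. The step that actually fails is the soundness/uncrossing argument. In the undirected graph, a feasible solution may traverse the image $\{u^+,v^-\}$ of an arc in the \emph{reverse} direction, and such traversals cannot in general be rerouted away: the padding technique of Kortsarz--Krauthgamer--Lee \cite{KortsarzKL04} as used in \cite{CheLNV2014} penalizes reverse traversals through \emph{vertex}-disjointness against a large padding set, and has no analogue for edge-disjoint paths, where large bundles of parallel zero-cost edges impose no constraint at all. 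Concretely, a cheap undirected solution can reuse the zero-cost padding structure of the directed hardness instance in both directions to create new ``illegal'' routes to the terminals, so the undirected optimum can be far below the directed one and your claimed equality of optima breaks.

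The paper avoids this by not reducing from $k$-DST at all: it gives a direct reduction from label cover to $k$-GST with a purpose-built ten-vertex gadget replacing each arc $(u_i^a,v_j^b)$, and, crucially, it uses \emph{large} groups consisting of degree-one vertices with the connectivity requirement set to $k_m=|T_m|$. This forces exactly one path into every group vertex, and these forced paths occupy the gadget edges in a way that blocks the extra routes that undirectedness would otherwise open up — precisely the phenomenon your singleton groups cannot control. If you want to salvage a reduction-from-$k$-DST strategy, you would need groups and requirements engineered so that the undirected solution is forced to respect the intended orientation, which is essentially what the paper's gadget-plus-group construction accomplishes.
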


\paragraph*{Related work.}
The $k$-DST is well-studied in the special case where all vertices are terminals. This problem is, as mentioned, known as the $k$-outconnected spanning subgraph problem ($k$-OCSS), which admits polynomial-time approximation algorithms due to the seminal work of Frank and Tardos \cite{FrankT88} (also, see \cite{Frank09}). However, while $k$-OCSS is polynomial-time solvable, its undirected counterpart is $\NP$-hard. Nevertheless, Frank-Tardos's algorithm has been used as subroutines to derive a $2$-approximation algorithm for the undirected variant of $k$-DST and its generalization.

In the presence of Steiner vertices, $k$-DST becomes much harder to approximate. For the case  of $k=1$, the best known polynomial-time approximation algorithms are $|T|^{\epsilon}$, for any constant $\epsilon>0$, due to the work of Charikar~et~al.~\cite{ChaCCDGGL1999}, and the same approximation ratio (with an additional log factor) applies for the case $k=2$ due to the work of Grandoni and Laekhanukit \cite{GraL2017}. These two special cases of $k$-DST, especially for the case $k=1$, have been perplexing researchers for many decades as it admits polylogarithmic approximation algorithms in quasi-polynomial-time, whereas there is no known sub-polynomial-approximation algorithm for the problems; see, e.g., \cite{ChaCCDGGL1999,GraLL2019,GhugeN20,GraL2017}. It has been a long-standing open problem whether polylogarithmic or even sub-polynomial-approximation ratios can be achieved in polynomial time.
Some special cases of $k$-DST have been studied in the literature. Laekhanukit \cite{Lae2016} studied the $k$-DST instances on $L$-layered graphs, and its extensions to the {\em $L$-shallow instances}, and presented an $O(k^L\cdot L\cdot \log{|T|})$-approximation algorithm that runs in $n^{O(L)}$ time. Polynomial-time polylogarithmic approximation algorithms for $k$-DST are known in quasi-bipartite graphs \cite{ChaLWZ2020,Nut2021} (also, see \cite{FriKS2016,HibF2016} for the case $k=1$, which matches the approximation lower bound of $(1-\epsilon)\ln k$, assuming $\P\neq\NP$, inherited from the {\em Set Cover} problem \cite{Feige98,DinS14}).

For the undirected case, there the case $k=1$, namely the Steiner tree problem admits a $1.39$-approximation algorithm due to the breakthrough result of Byrka~et~al.~\cite{ByrGRTS2013} and  admits a $\frac{73}{60}$-approximation algorithm on quasi-bipartite graphs due to the work of Goemans~et~al.~\cite{GoeORZ2012}. For $k\geq 2$, the problems on undirected graphs are branched into edge and vertex connectivity variants. This is not the case for directed graphs as there is a simple approximation-preserving reduction from edge-connectivity to vertex-connectivity and vice versa.
For the edge-connectivity problem, it admits a $2$-approximation algorithm via Frank-Jordan's algorithm when there is no Steiner vertex \cite{KhullerV94}, and a $2$-approximation algorithm via {\em iterative rounding} due to the seminal result of Jain \cite{jai2001}, which also applies for the more general case of the {\em edge-connectivity survivable network design} problem. For the vertex-connectivity problem, there is a $2$-approximation algorithm for $k=2$ due to Fleischer, Jain and Williamson \cite{FleJW2006}, but the problem becomes hard polynomial in $k$, for sufficiently large $k$ \cite{CheLNV2014}, assuming $\mathrm{P}\neq\mathrm{NP}$. The best known approximation algorithm for the {\em single-source $k$-vertex-connectivity} problem on undirected graphs is $O(k\log k)$ due to Nutov \cite{Nut2012}.

The network design problem where the connectivity requirements are between pairs of vertices is sometimes called point-to-point network design. A natural generalization is to extend the requirements to be between subsets of vertices, called groups. The classical problem in this genre is the well-studied {\em group Steiner tree} problem (see, e.g., \cite{GargKR00,CharikarCGG98,ChekuriEK06,HalperinKKSW07,HalperinK03,ChalermsookDLV17}). The group Steiner tree problem admits an approximation ratio of $O(\log q\log n)$ \cite{GargKR00}, which requires a probabilistic metric-tree embedding \cite{Bartal96,FakcharoenpholRT04}. This approximation ratio is almost tight as it matches the lower bound of $O(\log^{2-\epsilon}n)$, for any constant $\epsilon>0$, due to the hardness result of Halperin and Krauthgamer \cite{HalperinK03} assuming $\NP\not\subseteq\ZPTIME(n^{\polylog(n)})$; also, see the improved hardness result in \cite{GraLL2019}.
The fault-tolerant variant of the group steiner tree problem is called the $k$-edge-connected group Steiner tree problem, studied in \cite{KhandekarKN12,GuptaKR10,ChalermsookGL15,ChalermsookDELV18}. As mentioned, true approximation algorithms for this problem are known only for $k=1,2$ \cite{GargKR00,KhandekarKN12,GuptaKR10}. For $k\geq 3$, only a bi-criteria approximation algorithm is known \cite{ChalermsookGL15}.

\paragraph*{Organization.}
\Cref{sec:prelim} is devoted to preliminary notations, definitions and facts.
The reductions for $k$-DST are presented in Section \ref{sec:overview}, \ref{sec:hardness-terminals} and \ref{sec:hardness-connectivity}.
We give some intuitions on the techniques in \Cref{sec:overview} and describe the reduction for hardness in terms of $|T|$ in \Cref{sec:hardness-terminals} and for hardness in terms of $k$ in \Cref{sec:hardness-connectivity}.
Finally, we extend our techniques to undirected models and tighten the $\Omega\tp{T/\log T}$ lower bound to $\Omega\tp{T}$ under a stronger complexity hypothesis.
The proofs of inapproximability results for $k$-GST and $k$-ST are presented in \Cref{sec:hardness-undirected-group} and Appendix~\ref{sec:hardness-undirected}, respectively.
The tightened lower bound $\Omega\tp{T}$ is presented in \Cref{sec:SPCH-hardness}.

\section{Preliminaries}
\label{sec:prelim}

% We use standard graph theory terminology.
% Let $G=(V,E)$ be a directed (multi)graph.
% We use $\indeg_G(S)$ (\emph{resp.} $\outdeg_G(S)$) to denote the number of arcs incoming (\emph{resp.} outgoing) to (\emph{resp.} from) $S\subseteq V$.
% To be more precise,
% $$
%  \indeg_G(S)= \#\set{(u,v)\in E: u\not\in S, v\in S} \quad\text{and}\quad \outdeg_G(S)= \#\set{(u,v)\in E: u\in S, v\not\in S}.
% $$
% We also use  $G=(V,E)$ to denote undirected graphs, and we use $\deg_G(v)$ to denote the degree of a vertex $v\in V$ in $G$.

% We will omit the subscript if the graph $G$ is clear in the context.
% When considering two or more graphs, we will use $V(G)$ and $E(G)$ to mean the set of vertices and arcs (edges) of a graph $G$.
We use a standard graph terminology.
Let $G=(V,E)$ be any graph, which can be either directed or undirected.
For undirected graphs, we refer to the elements in $E$ as the ``edges'' of $G$ and denote by $\deg_G(v)$ the number of edges incident to a vertex $v\in V$.
For directed graphs, we refer to the elements in $E$ as the ``arcs'' of $G$ and denote by $\indeg_G(v)$ the number of arcs entering $v$.
The notation for an edge/arc is $(u,v)$, or sometimes $u\to v$ for an arc.
For a path between vertex $u$ and $w$, we call it a $(u,w)$-path and write it as $(u,v,\ldots,w)$ for both directed and undirected graphs, or $u\to v\to \cdots \to w$ for only directed graphs.
The graphs may have multiple edges/arcs between two same vertices $u$ and $v$, and both $\deg_G(v)$ and $\indeg_G(v)$ count multiple ones.
We drop $G$ from the notations when it is clear from the context.
When more than one graph is considered, we use $V(G)$ to clarify the vertex set of $G$, and $E(G)$ the edge/arc set.

\paragraph*{$k$-(Edge)-Connected Directed Steiner Tree.}
The $k$-(edge)-connected directed Steiner tree problem ($k$-DST) is defined as follows.
An input instance is of the form $(k,G,r,T)$ where $k\in \mathbb{Z}_{\ge 1}$ is the connectivity requirement, $G=(V,E)$ is a directed graph with weight (or cost) on arcs $\cost: E\to \mathbb{Q}_{\ge 0}$, $r\in V$ is called root and $T\subseteq V$ is a set of terminals.
A subgraph $H=(V,F)$ of $G$ is $k$-connected if there exist $k$ edge-disjoint paths in $H$ from $r$ to $t$ for each terminal $t\in T$.
Sometimes we also refer to a $k$-connected subgraph as a feasible solution to the $k$-DST problem.
The problem is to find a $k$-connected subgraph $H=(V,F)$ of minimum cost $\cost(H)=\sum_{e\in F}\cost(e)$.

Here, we define the problem in terms of edge-connectivity.
A vertex-connectivity variant is defined similarly except that it asks for (openly) vertex-disjoint paths instead of edge-disjoint paths.
Both variants are equivalent in terms of approximability on directed graphs because there exist straightforward polynomial-time approximation-preserving reductions from any one to the other.

\paragraph*{(Minimum) Label Cover.}
An instance of the label cover problem is given by an (undirected) bipartite graph $\+G=(\+U, \+V, \+E)$, a set of labels $\Sigma=[g]$ and (projection) constraints $\pi_{uv}:\Sigma\rightarrow\Sigma$ on each edge $(u,v)\in \+E$.
A multilabeling $\sigma:\+U\cup \+V\rightarrow 2^{\Sigma}$ is a subset of labels assigned to each vertex.
We say that $\sigma$ {\em covers} an edge $(u,v)\in\+E$ if $\pi_{uv}(a)=b$ for some $a\in\sigma(u)$ and $b\in \sigma(v)$.
A multilabeling is {\em feasible} if it covers all the edges of $\+E$.
The problem asks for a feasible multilabeling $\sigma$ with minimum cost $\cost(\sigma)=\sum_{u\in \+U\cup \+V}|\sigma(u)|$.

Manurangsi~\cite{Man2019} proved that the label cover problem has a hardness gap in terms of the maximum degree of $\+G$.

\begin{theorem}[\cite{Man2019}]
  For every positive integer $g>1$, unless $\NP=\ZPP$, it is hard to approximate a label cover instance of maximum degree $O(g\log g)$ and alphabet size $O(g^4\log^2 g)$ within a factor of $g$.
\end{theorem}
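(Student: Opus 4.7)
The plan is to derive this label cover gap hardness via a two-stage reduction: first, establish a base gap from PCP theory and amplify it by parallel repetition; second, control the structural parameters (alphabet size and maximum degree).

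I would begin from a standard bipartite projection game obtained via the PCP theorem plus Raz's parallel repetition (or the Moshkovitz--Raz PCP), which gives an instance with constant alphabet $q_0$, completeness $1$, and soundness $s_0 < 1$. Applying $t$ further rounds of parallel repetition inflates the alphabet to $q_0^t$ and drops the soundness to $s_0^{\Omega(t)}$. Reformulating as minimum covering, a YES instance admits optimum cost $|\+U|+|\+V|$ (one label per vertex covers every edge), while in a NO instance any feasible multilabeling $\sigma$ must satisfy $\sum_u |\sigma(u)| = \Omega(n/\sqrt{s})$, where $n = |\+U|+|\+V|$ and $s$ is the maximum single-labeling value. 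This follows from a standard Cauchy--Schwarz argument: the expected number of edges satisfied by a single labeling drawn uniformly from $\sigma$ is at most $s|\+E|$, while feasibility forces each covered edge to contribute a reciprocal-product lower bound of the form $1/(|\sigma(u)|\cdot|\sigma(v)|)$. Picking $t$ so that $s = \Theta(1/g^2)$ then yields the desired gap of $g$.

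Next I would control the structural parameters. With appropriate choice of $q_0$ and $t$ (roughly $q_0^t = \Theta(g^4 \log^2 g)$), the theorem's alphabet bound is met. To enforce maximum degree $O(g \log g)$, I would replace each high-degree vertex by multiple copies, one per chunk of $O(g\log g)$ incident edges, each copy labeled independently. Since this replication scales both the YES-optimum and the NO-optimum by the same multiplicative factor, the gap is preserved, and each retained vertex now satisfies the required degree bound.

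The main obstacle is the simultaneous calibration of parameters: matching $q_0^t = O(g^4 \log^2 g)$, degree $O(g \log g)$, and gap $g$ requires a careful balance between the repetition count $t$, the base alphabet $q_0$, and the degree-reduction overhead. One must also verify that the Cauchy--Schwarz-style lower bound on the NO-case cost is tight up to constants, and that the parallel-repetition theorem applies in the quantitative form needed. The randomized choices in the degree-reduction gadget, combined with standard randomized PCP constructions, are what naturally call for the $\NP \neq \ZPP$ hypothesis rather than the stronger $\P \neq \NP$.
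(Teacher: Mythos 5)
The paper does not prove this statement at all---it is imported verbatim from Manurangsi [Man2019]---so the only question is whether your derivation would establish it, and it would not. The decisive failure is quantitative: your soundness step bounds the NO-case cost by $\Omega\tp{n/\sqrt{s}}$, where $s$ is the best fraction of edges satisfiable by a \emph{single} labeling. But every projection/label-cover instance of maximum degree $\Delta$ (in particular any instance produced by parallel repetition, which has perfect completeness) admits a single labeling satisfying an $\Omega(1/\Delta)$ fraction of the edges: give each right vertex an arbitrary label, and let each left vertex pick a random incident edge and a preimage of its endpoint's label. Hence $s\ge \Omega(1/\Delta)$, so this style of argument can never certify a gap better than $O\tp{\sqrt{\Delta}}$; to reach gap $g$ you are forced into $\Delta=\Omega(g^2)$, far from the claimed $O(g\log g)$. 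Beating this $\sqrt{\Delta}$ barrier is precisely the content of Manurangsi's note: his proof does not pass through the max-value of the final instance at all, but randomly subsamples the edges of a hard Min-Rep instance to degree $O(g\log g)$ and runs a union bound over all multilabelings of small cost, showing that with high probability each of them leaves some sampled edge uncovered. That randomized sparsification---not ``randomized PCP constructions'' or your (deterministic) splitting gadget---is also the actual source of the $\NP\neq\ZPP$ assumption.

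Your degree-reduction gadget fails independently of this. Splitting a degree-$D$ vertex into $D/(g\log g)$ independently labeled copies does \emph{not} scale the two optima by the same factor: the YES optimum inflates to roughly the total number of copies (each copy must receive at least one label), while the NO optimum does not inflate correspondingly, because taking the union over copies of their label sets converts any feasible multilabeling of the split instance into a feasible multilabeling of the original instance of no larger cost. The gap therefore degrades by roughly $\bar d/(g\log g)$, where $\bar d$ is the average degree of the repeated game---and $\bar d$ is $g^{\Theta(1)}$ after the $\Theta(\log g)$ rounds of repetition needed to push the soundness down to $1/g^2$. So the calibration you flag as ``the main obstacle'' is not bookkeeping: the route via parallel repetition plus vertex splitting cannot reach the stated degree--alphabet--gap trade-off, which is why a genuinely different (subsampling) argument is needed.
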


The following corollary can be deduced straightforwardly.

\begin{corollary} \label{cor:T-polylog-hardness} \label{cor:T-0.99-hardness}
  It is hard to approximate a label cover instance of maximum degree $\Delta$ and alphabet size $O\tp{\Delta^4/\polylog(\Delta)}$ to within a factor of $\Omega\tp{\Delta/\log\Delta}$, or to within a factor of $\Omega\tp{\Delta^{1-\eps}}$ for any constant $\eps>0$, unless $\NP=\ZPP$.
\end{corollary}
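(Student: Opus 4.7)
The plan is to derive the corollary by rescaling Manurangsi's theorem. His hardness is stated in terms of a parameter $g$: the produced label cover instances have maximum degree $O(g \log g)$, alphabet size $O(g^4 \log^2 g)$, and gap $g$. The corollary asks for the same hardness re-expressed as a function of the maximum degree $\Delta$, so I simply choose $g$ as a function of $\Delta$ in two regimes, one for each of the claimed bounds.

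For the first bound, I set $g = \lceil \Delta/\log \Delta \rceil$. Then $\log g = \log \Delta - \log \log \Delta = \Theta(\log \Delta)$, so $g \log g = \Theta(\Delta)$, meaning the resulting instance has maximum degree $O(\Delta)$. Its alphabet size is $O(g^4 \log^2 g) = O(\Delta^4/\log^2 \Delta) = O(\Delta^4/\polylog(\Delta))$, and the inapproximability factor is $g = \Omega(\Delta/\log \Delta)$, matching the first part. For the second bound, I set $g = \lfloor \Delta^{1-\eps} \rfloor$. Since $\log \Delta = o(\Delta^\eps)$, one gets $g \log g = O(\Delta^{1-\eps} \log \Delta) = O(\Delta)$, and the alphabet size is $O(\Delta^{4(1-\eps)} \log^2 \Delta)$, which is $O(\Delta^4/\polylog(\Delta))$ for any fixed $\eps > 0$ and sufficiently large $\Delta$, with gap $g = \Omega(\Delta^{1-\eps})$, as required.

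The only minor technical point I anticipate is that Manurangsi's theorem yields maximum degree $O(\Delta)$ rather than exactly $\Delta$; this is harmless, since the hidden constant can be absorbed into the $\Omega(\cdot)$ of the inapproximability factor, and if one wishes the degree bound to be tight one may pad with isolated dummy vertices or trivial edges carrying dummy labels that do not affect the optimum cost. I expect no deeper obstacle beyond these substitutions and the routine asymptotic checks.
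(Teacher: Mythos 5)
Your proposal is correct and is essentially the paper's (implicit) argument: the corollary is obtained by the same direct parameter translation of Manurangsi's theorem, taking $g=\Theta(\Delta/\log\Delta)$ so that the maximum degree becomes $O(\Delta)$, the alphabet $O(\Delta^4/\log^2\Delta)$, and the gap $\Omega(\Delta/\log\Delta)$. Your separate substitution $g=\lfloor\Delta^{1-\eps}\rfloor$ for the second bound is fine but unnecessary, since $\Omega(\Delta/\log\Delta)\ge\Omega(\Delta^{1-\eps})$ for any fixed $\eps>0$, so the second statement follows immediately from the first on the very same instances.
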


To obtain the hardness results on $k$-DST (and related problems), we present reductions form the label cover problem on an instance
 $(\+G=(\+U,\+V,\+E), \Sigma=[g], \pi=\set{\pi_{uv}: \Sigma\to \Sigma}_{(u,v)\in \+E})$ of maximum degree $\Delta$.
For the ease of presentation, let $\+U=\set{u_1,u_2,\ldots}$ and $\+V = \set{v_1,v_2,\ldots}$.

Finally, we prepare a technical lemma for future reference. We say that a subgraph $I$ of $G$ is an \emph{induced matching} if 1) $I$ is a matching, i.e., each vertex in $G$ is the endpoint of at most one edge in $E(I)$ and 2) $I$ is an induced subgraph, i.e., all edges with two endpoints both in $V(I)$ are included in $E(I)$.

\begin{lemma}[Folklore] \label{lem:matching} \label{lem:induced-matching}
  Let $G=(U,V,E)$ be a bipartite graph of maximum degree $\Delta$.
  There exist a \emph{partition} of the edges $E=E_1\cup E_2 \cup \cdots \cup E_{\Delta}$ such that each $E_i$ is a matching of $G$, and a partition $E_1,E_2,\ldots,E_{\delta}$ of $E$ for some $\delta\le 2\Delta^2$ such that each $E_i$ is an induced matching.
  Furthermore, such partitions can be found in polynomial time.
\end{lemma}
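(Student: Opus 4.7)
The plan is to obtain the two decompositions sequentially. For the first partition, I would invoke König's edge-coloring theorem: every bipartite graph of maximum degree $\Delta$ admits a proper edge coloring with $\Delta$ colors, and such a coloring is computable in polynomial time by the standard augmenting-path / alternating-fan procedure. The color classes are exactly the matchings $E_1,\ldots,E_\Delta$ required in the first half of the statement.

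For the induced-matching partition, I would start from the above decomposition and refine each matching $M = E_i$ separately. Define an auxiliary \emph{conflict graph} $H_M$ on vertex set $M$, where two matching edges $e, e' \in M$ are adjacent in $H_M$ iff $G$ contains an edge with one endpoint in $V(e)$ and one in $V(e')$. The key observation is that every independent set of $H_M$, viewed as a subset of $E(G)$, is an induced matching of $G$: its edges are pairwise disjoint (because $M$ is a matching) and no chords survive among their endpoints (by independence in $H_M$). To bound $\Delta(H_M)$, fix $e = (u,v) \in M$; any edge making $e$ adjacent in $H_M$ to another $e' \in M$ must be incident to $u$ or to $v$. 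There are at most $\Delta-1$ edges of $G$ at $u$ besides $e$, and each such edge hits $V(e')$ for at most one $e' \in M$ since $M$ is a matching; the same holds at $v$. Hence $\Delta(H_M) \le 2(\Delta-1)$, so greedy vertex coloring partitions $M$ into at most $2\Delta - 1$ independent sets, i.e., into at most $2\Delta-1$ induced matchings of $G$. Applied to each of the $\Delta$ color classes this yields at most $\Delta(2\Delta-1) \le 2\Delta^2$ induced matchings overall, and every step (König's coloring, building $H_M$, greedy coloring of $H_M$) is polynomial time.

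I do not anticipate any genuine obstacle. The only point that deserves a moment's care is the degree bound on $H_M$, where one must use that $M$ is a matching so that each edge of $G$ not in $M$ can witness at most one adjacency in $H_M$ from the side of $u$ (and likewise from $v$); this is what keeps the bound at $2(\Delta-1)$ and ultimately delivers the factor $2\Delta^2$ in the statement. The rest is immediate from the construction.
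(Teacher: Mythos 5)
Your proof is correct. The paper states this lemma as folklore without giving a proof, and your argument is a valid instantiation of the standard one: König's edge-coloring theorem yields the $\Delta$ matchings in polynomial time, and your refinement of each color class via the conflict graph $H_M$ (whose maximum degree is indeed at most $2(\Delta-1)$, since each of the at most $\Delta-1$ edges at $u$ or at $v$ besides $e$ can witness adjacency to at most one matching edge) gives at most $\Delta(2\Delta-1)\le 2\Delta^2$ induced matchings, matching the claimed bound $\delta\le 2\Delta^2$; the independence-implies-induced-matching step is also sound, since in a simple bipartite graph any edge with both endpoints in $V(I)$ either equals some $e\in I$ or would witness an adjacency in $H_M$. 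The only alternative folklore route is a single greedy strong edge coloring of $G$ (coloring edges so that edges at distance at most one get distinct colors, with at most $2\Delta(\Delta-1)+1\le 2\Delta^2$ colors), which yields the induced-matching partition in one shot; your two-stage version is equally valid and equally efficient.
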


\section{Overview of the Reductions}
\label{sec:overview}

To give some intuitions on how our reductions work, we dedicate this section to providing an overview. We have two main reductions, which are tailored for inapproximability results in different parameters, say $|T|$ and $k$.

Both of the reductions inherit approximation hardness from the same source -- the label cover problem, denoted by $(\+G,\Sigma,\pi)$. We design reductions that have a one-to-one correspondence between a feasible solution to the label cover problem and that to the $k$-DST problem, i.e.,

\begin{itemize}
    \item {\bf Completeness}: Given a feasible multilabeling $\sigma$ of the label cover instance $(\+G,\Sigma,\pi)$, there is a corresponding $k$-connected subgraph $H$ of $G$ such that $\cost(\sigma)=\cost(H)$.
    \item {\bf Soundness}: Given a $k$-connected subgraph $H$ of the $k$-DST instance, there is a corresponding feasible multilabeling $\sigma$ of the label cover instance $(\+G,\Sigma,\pi)$ such that $\cost(\sigma)=\cost(H)$.
\end{itemize}

%The reductions are then tailored so that the inapproximation ratio can be written in terms of $|T|$ and $k$, respectively.

\paragraph*{Basic Construction.}
First, we present the basic construction for the case $k=1$, which is sufficient to maintain the completeness property (but not the soundness).
We start by adding a root vertex $r$ and terminal vertices $t_{ij}$, one for each edge $(u_i,v_j)\in \+E$.
Observe that each terminal corresponds to an edge in the label cover instance.
Thus, we wish to make an $(r,t_{ij})$-path in $G$ to correspond a labeling that covers an edge in $\+G$, which we call a \emph{cover path}.
To be more precise, if the edge $(u_i,v_j)$ is covered by a label pair $(a,b)$, then the corresponding \emph{cover path} in $G$ is the path $r\to u_{i}^a\to v_{j}^b\to v_{j}\to t_{ij}$. The appearance of the vertex $u_{i}^a\in V$ along the cover path can be interpreted as assigning the label $a$ to the vertex $u_i\in \+U$, and similarly, $v_{j}^b$ means assigning the label $b$ to the vertex $v_j\in \+V$. See \Cref{fig:cover-path} for illustration.
Note that any feasible multilabeling covers all the edges of $\+G$, so we can collect all cover paths (and the involved vertices) to form a subgraph, where the root is already $1$-connected to the terminals.
By setting the weight of all arcs $r\to u_{ia}$ and $v_{jb}\to v_{j}$ to be $1$ while leaving other arcs zero-cost, the subgraph has the same cost as the multilabeling.
 \begin{figure}[htbp]
  \begin{minipage}[t]{0.5\textwidth}
    \centering
    \begin{tikzpicture}
      \path (   0,  0) node(r)    [circle,fill=black!50,label=left:$r$] {};
      \path (   1.5,  0) node(uia)    [circle,fill=black!50,label=above:$u_{i}^a$] {};
      \path (r) -- (uia) [->, draw, line width=\LW];
      \path (   3,  0) node(wjb)    [circle,fill=black!50,label=above:$v_{j}^b$] {};
      \path (uia) -- (wjb) [->, draw, line width=\LW];
      \path (   4.5,  0) node(wj)    [circle,fill=black!50,label=above:$v_{j}$] {};
      \path (wjb) -- (wj) [->, draw, line width=\LW];
      \path (   6,  0) node(tij)    [circle,fill=black!50,label=above:$t_{ij}$] {};
      \path (wj) -- (tij) [->, draw, line width=\LW];
    %   \path ( 0, -1.5) node(ui2)  [circle,fill=black!30,label=below:$u_{i,2}$] {};
    %   \path (r) -- (ui2) [->, draw, line width=\LW, color=NavyBlue];
    %   \path ( 1.7, -1.5) node(ui3)  [circle,fill=black!30,label=left:$u_{i,3}$] {};
    %   \path (r) -- (ui3) [->, draw, line width=\LW, color=NavyBlue];
    \end{tikzpicture}
    \caption{A cover path $(r, u_{i}^a,v_{j}^b,v_j,t_{ij})$.}
    \label{fig:cover-path}
  \end{minipage}
  \begin{minipage}[t]{0.5\textwidth}
    \centering
    \begin{tikzpicture}
      \path (   0,  0) node(r)    [circle,fill=black!50,label=left:$r$] {};
      \path (   1.5,  0) node(uia)    [circle,fill=black!50,label=above:$u_{i}^a$] {};
      \path (r) -- (uia) [->, draw, line width=\LW];
      \path (   3,  0) node(wjb)    [circle,fill=black!50,label=above:$v_{j}^b$] {};
      \path (uia) -- (wjb) [->, draw, line width=\LW];
      \path (   4.5,  0) node(wj)    [circle,fill=black!50,label=above:$v_{j}$] {};
      \path (wjb) -- (wj) [->, draw, line width=\LW];
      \path (   6,  0) node(tij)    [circle,fill=black!50,label=above:$t_{ij}$] {};
      \path (wj) -- (tij) [->, draw, line width=\LW];

      \path (   1.5,  -1) node(ui2a)    [circle,fill=black!50,label=above:$u_{i'}^a$] {};
      \path (r) -- (ui2a) [->, draw, line width=\LW, color=Red];
      \path (ui2a) -- (wjb) [->, draw, line width=\LW, color=Red];

    %   \path ( 0, -1.5) node(ui2)  [circle,fill=black!30,label=below:$u_{i,2}$] {};
    %   \path (r) -- (ui2) [->, draw, line width=\LW, color=NavyBlue];
    %   \path ( 1.7, -1.5) node(ui3)  [circle,fill=black!30,label=left:$u_{i,3}$] {};
    %   \path (r) -- (ui3) [->, draw, line width=\LW, color=NavyBlue];
    \end{tikzpicture}
    \caption{An illegal path $(r, u_{i'}^a,v_{j}^b,v_j,t_{ij})$.}
    \label{fig:illegal-path}
  \end{minipage}
\end{figure}

However, the soundness property does not hold on the basic construction because it creates many \emph{illegal} $(r,t_{ij})$-paths.
Such a path goes from the root to a terminal $t_{ij}$ by using the route that differs from $(u_{i},v_{j})\in \+E$, e.g., $r\to u_{i'}^a\to v_{j}^b\to v_{j}\to t_{ij}$, where ${i'} \neq u_{i}$; see \Cref{fig:illegal-path}.
This means that, although a solution is feasible to the $1$-DST instance, the subgraph may not have a cover path for every terminal. Thus, the corresponding multilabeling may leave some edges in the label cover instance uncovered.
To ensure that at least one cover path exists for every terminal, we need to modify our instance using the \emph{padding arc} technique.

\paragraph*{Padding Arcs.}
Consider an illegal $(r,t_{ij})$-path in the basic construction. While we wish the $(r,t_{ij})$-path to visit vertices $u_{i}^a$ and $v_{j}^b$, which corresponds to a satisfying labeling to the edge $(u_{i},v_{j})$ in the label cover instance,  the illegal path instead visits $u_{i'}^{a'}$ with $u_{i'}\neq u_{i}$.
In particular, the illegal path exploits a cover path for some other edge $(u_{i'},v_{j})$ that share the same endpoint with $(u_{i},v_{j})$.

To prevent this from happening, we construct a zero-cost \emph{padding path}  from the root to the terminal $t_{ij}$ that shares some arcs with it.
These two paths are mutually exclusive in contributing to the edge-connectivity between the root $r$ and the terminal $t_{ij}$.
As we set the connectivity requirement to be the same as the indegree of $t_{ij}$, it forces all the padding paths for $t_{ij}$ to be used in any feasible solution. Once all the padding paths are used to form $k-1$ edge-disjoint $(r,t_{ij})$, the only path available is forced to be a cover path.
%
% BUN: This explaination is not correct.
%
%% That is, for any subgraph $H$ of $G$ the number of edge-disjoint paths from $r$ to $t_{ij}$ can be increased by at most one by augmenting $H$ with both paths.
%% Since the illegal path has a positive cost while the padding path is free of charge, by increasing the connectivity requirement at $t_{ij}$ by one, an optimal solution has to contain the padding path, squeezing out the illegal one.
%% We add some appropriate \emph{padding arcs} to obtain such desired padding paths, so that the reductions become sound.

%The main idea behind is that we prohibit these \emph{illegal paths} by constructing an extra \emph{padding path} that has overlapping with each \emph{illegal path} but with zero cost. Consider if we add one connectivity requirement for the terminal, we can choose only one of the zero cost \emph{padding path} and the \emph{illegal path} to contribute the additional connectivity. Therefore, we can promise that the original one connectivity requirement can only be completed by a \emph{cover path}, which archives the soundness. We make these \emph{padding paths} via some zero cost \emph{padding arcs}.

\paragraph*{Size of $|T|$ and $k$.} The construction as mentioned above yields a one-to-one correspondence between the feasible solutions to $k$-DST and that of the label cover problem. However, the size of the construction in terms of the parameter $k$ (resp., $|T|$) is too large comparing to the inapproximation factor of $\Omega(\Delta/\log\Delta)$ inherited from the label cover problem.
Specifically, the value of $k$ can be as large as the number of illegal paths, and the value of $|T|$ can be as large as $|\+E| = |\+U|\cdot \Delta$. Therefore, we need to optimize the size of $|T|$ and $k$, which we do os using two different techniques, one for each parameter.
\begin{itemize}
    \item To control the size of the terminal set, we partition the edge set $\+E$ into $\Delta$ {\bf matchings}, and only {\bf one} terminal is constructed for {\bf each matching} rather than having one terminal for each edge. This, thus, reduces the size of $T$ to $\Delta$.\footnote{Laekhanukit~\cite{Lae2014} applies a similar techniques using strong edge-coloring, which gives a worse factor of $\Delta^2$.}
    \item To control the connectivity requirement $k$, we partition the edge set $\+E$ into $\delta\le 2\Delta^2$ {\bf induced matchings} and create a {\em $d$-ary} tree structure with $\delta$ leaves, where $d$ is an adjustable parameter. Roughly speaking, by exploiting the tree structure, we need to add to each terminal only $d$ \emph{padding arcs} for each of the $O(\log_d\delta)$ layers of the tree. Thus, the connectivity requirement $k$ is reduced to $O(d\log_d \delta)$.
\end{itemize}

\paragraph*{Generalized to Undirected Settings.}
For the undirected $k$-connected Steiner tree problem ($k$-ST), in \Cref{sec:hardness-undirected} we migrate the reduction for $k$-DST with a hardness result in terms of $|T|$ to its vertex-connectivity version, with necessary adaptions, thus reproducing the same $|T|/\log|T|$ inapproximability.
The $k$-connected group Steiner tree problem ($k$-GST) is a generalization of the $k$-ST problem that turns out to be one of the key components in designing approximation algorithms for $k$-DST \cite{zelikovsky1997series, GraL2017} for $k=1,2$.
By incorporating ideas from \cite{chalermsook2014survivable}, we achieved the same $q/\log q$ inapproximability for even the edge-disjoint version of $k$-GST where $q$ is the number of groups.
The main difficulty is that undirected edges allow new illegal paths to enter a terminal.
Fortunately, by equalizing the connectivity requirement and the size of a group, it turns out we can handle the extra paths.
See \cref{sec:hardness-undirected-group} for a complete presentation of the proof.

\section{Inapproximability in Terms of the Number of Terminals}
\label{sec:hardness-terminals}

%In this section, we focus on a specific label cover instance $(\+G,\Sigma, \set{\pi_{u,w}}_{(u,w)\in \+E})$. We show how to construct a $k$-DST instance $(k, G=(V,E), r, T)$ in polynomial time, and we claim that there is a one-to-one mapping between the solution of the label cover instance $(\+G,\Sigma, \set{\pi_{u,w}}_{(u,w)\in \+E})$ and that of the $k$-DST instance $(k,G,r,T)$.

In this section, we discuss the hardness reduction that is tailored for the parameter $|T|$.
Our reduction takes as input a label cover instance $(\+G,\Sigma,\pi)$ and then produces a $k$-DST instance $(k, G=(V,E), r, T)$ as an output. 
The reduction runs in polynomial-time, and there is a one-to-one mapping between the solutions to the two problems.
Thus, the inapproximability result of label cover is mapped to the inapproximability of $k$-DST directly. The main focus in this section is in reducing the number of terminals by exploiting edge-disjoint paths.

\paragraph*{Base Construction.}

The construction starts from a basic building block.
\begin{enumerate}
    \item First, create a graph $G$ with a single vertex, the root $r$.
    \item (Refer to \Cref{fig:T-encode-U}) For each $u_i\in \+U$, create in $G$ each vertex $u_i^a$ from $A_i = \set{u_i^a: a\in \Sigma}$; connect $r$ to $u_i^a$ by an arc $(r,u_i^a)$ of cost one.
    \item (Refer to \Cref{fig:T-encode-V}) For each $v_j\in \+V$, create a counterpart of it (also named $v_j$) in $G$ and create each vertex $v_j^b$ from $B_j = \set{v_j^b: b\in \Sigma}$; connect $v_j^b$ to $v_j$ by an arc $(v_j^b, v_j)$ of cost one.
    \item (Refer to \Cref{fig:T-encode-E}) For each $(u_i, v_j)\in \+E$ and $a,b\in\Sigma$, connect $u_i^a$ to $v_j^b$ by a zero-cost arc if $\pi_{u_iv_j}(a)=b$.
\end{enumerate}

\begin{figure}[htbp]
  \begin{minipage}[t]{0.3\textwidth}
    \centering
    \begin{tikzpicture}
      \path (   0,  0) node(r)    [circle,fill=black!50,label=left:$r$] {};
      \path (-1.7, -1.5) node(ui1)  [circle,fill=black!30,label=right:$u_i^1$] {};
      \path (r) -- (ui1) [->, draw, line width=\LW, color=NavyBlue];
      \path ( 0, -1.5) node(ui2)  [circle,fill=black!30,label=below:$u_i^2$] {};
      \path (r) -- (ui2) [->, draw, line width=\LW, color=NavyBlue];
      \path ( 1.7, -1.5) node(ui3)  [circle,fill=black!30,label=left:$u_i^3$] {};
      \path (r) -- (ui3) [->, draw, line width=\LW, color=NavyBlue];
    \end{tikzpicture}
    \caption{$\Sigma=\set{1,2,3}$}
    \label{fig:T-encode-U}
  \end{minipage}
  \begin{minipage}[t]{0.3\textwidth}
    \centering
    \begin{tikzpicture}
      \path (   0,  0) node(wj)    [circle,fill=black!30,label=below:$v_j$] {};
      \path (-1.5,  1.5) node(wj1)  [circle,fill=black!30,label=right:$v_j^1$] {};
      \path (wj1) -- (wj) [->, draw, line width=\LW, color=NavyBlue];
      \path ( 1.5,  1.5) node(wj2)  [circle,fill=black!30,label=left:$v_j^2$] {};
      \path (wj2) -- (wj) [->, draw, line width=\LW, color=NavyBlue];
    \end{tikzpicture}
    \caption{$\Sigma=\set{1,2}$}
    \label{fig:T-encode-V}
  \end{minipage}
  \begin{minipage}[t]{0.4\textwidth}
    \centering
    \begin{tikzpicture}
      \path (0,0) node (ui1) [circle,fill=black!30,label=above:$u_i^1$] {};
      \path (1.5,0) node (ui2) [circle,fill=black!30,label=above:$u_i^2$] {};
      \path (3,0) node (ui3) [circle,fill=black!30,label=above:$u_i^3$] {};
      \path (0,-1.5) node (wj1) [circle,fill=black!30,label=below:$v_j^1$] {};
      \path (1.5,-1.5) node (wj2) [circle,fill=black!30,label=below:$v_j^2$] {};
      \path (3,-1.5) node (wj3) [circle,fill=black!30,label=below:$v_j^3$] {};
      \path (ui1) -- (wj2) [->, draw, line width=\LW];
      \path (ui2) -- (wj3) [->, draw, line width=\LW];
      \path (ui3) -- (wj1) [->, draw, line width=\LW];
    \end{tikzpicture}
    \caption{$\pi_{u_iv_j}(x)= x\,\mathrm{mod}\, 3 + 1$}
    \label{fig:T-encode-E}
  \end{minipage}
\end{figure}

\paragraph*{Final Construction.}

Finally, we have to add some zero-cost arcs so called {\em padding arcs} to the base construction, which are meant to enforce the constraints of the label cover problem into the $k$-DST instance.
We first partition the edge set $\+E$ of the graph $\+G$ in the label cover instance into $\Delta$ matchings, denoted by $\+E_1,\+E_2,\ldots,\+E_\Delta$. This step can be done in polynomial-time due to \Cref{lem:matching}. We then create a set of $\Delta$ terminals corresponding to these matchings.
\begin{enumerate}
    \item[1.] For each matching $\+E_m$, add a terminal $t_m$ to $\+G$ and connect the counterpart of each $v_j\in V(\+E_m)\cap \+V$ in $G$ to $t_m$. (Refer to \Cref{fig:T-terminal}).
    %Now we have a set of terminals $T=\set{t_1,t_2,\ldots,t_\Delta}$ of size $\Delta$ 
\end{enumerate}
Next, we add some padding arcs to form padding paths that ``kills'' illegal paths.
\begin{enumerate}
    \item[2.] Instead of connecting $u_i^a$ to $v_j^b$ for the edges $(u_i,v_j)\in \+E$ such that $\pi_{u_i v_j}(a)=b$ directly, we add an internal node $w_{ij}^{ab}$ and replace the original arc $(u_i^a, v_j^b)$ in $G$ by arcs $(u_i^a,w_{ij}^{ab})$ and $(w_{ij}^{ab},v_j^b)$. (Here and thereafter refer to \Cref{fig:T-padding-arcs}.)
    \item[3.] For each $u_i^a$, we connect $r$ to $u_i^a$ by $\deg_\+G(u_i)$ copies of an arc $(r,u_i^a)$.
    \item[4.] For $1\le m\le \Delta$, $(u_i,v_j)\in \+E$ and $a,b\in\Sigma$ such that $\pi_{u_i v_j}(a)=b$, if $(u_i,v_j)\in \+E_m$, then we add an arc $(u_i^a, t_m)$; otherwise, we add an arc $(w_{ij}^{ab}, t_m)$.
    Thus, we finally have $|\Sigma|\cdot|\+E_m|$ arcs from the vertex set $\set{u_i^a: u_i\in \+U,a\in\Sigma}$ to $t_m$, and $|\Sigma|\cdot(|\+E|-|\+E_m|)$ arcs from the internal vertex set $\set{w_{ij}^{ab}: \pi_{u_iv_j}(a)=b}$ to $t_m$.
    \item[5.] We set $k = \max_{1\le m\le \Delta} \indeg_{G}(t_m)$. To make the connectivity requirement uniform, we add $k-\indeg_{G}(t_m)$ copies of an arc $(r, t_m)$ for each terminal $t_m$.
\end{enumerate}

Please see \Cref{fig:T-padding-arcs} for an illustration. Observe that the connectivity requirement $k$ is exactly the indegree of each terminal. Thus, all of its incoming arcs are needed in any feasible solution. Now, consider the edge $(u_i,v_{j'})$ in the figure, which is not in $\+E_m$. It is possible that a feasible solution includes the path $r\to u_i^a\to w_{ij'}^{ab'}\to v_{j'}^{b'}\to v_{j'}\to t_m$, which is an illegal path for the terminal $t_m$.
However, if we wish to route $k$-edge-disjoint paths between the root $r$ and $t_m$, then the arc $w_{ij'}^{ab'}\to t_m$ must be used, and the only way to use this arc is to traverse from $u_i^a\to w_{ij'}^{ab'}$. This prevents the illegal path from using this arc, meaning that it cannot be included in any $k$-edge-disjoint $(r,t_m)$-paths.
Less formally, we may say that it gets killed by the padding path $r\to u_i^a\to w_{ij'}^{ab'}\to t_m$.

\paragraph*{Why Do We Need A Matching?}
Consider a terminal $t_m$.
Our construction promises edge-disjoint cover paths from $r$ to $t_m$ for every edge in $\+E_m$.
However, if the edges in $\+E_m$ do not form a matching, then two cover paths may share some edge. That is, the corresponding subgraph of a feasible multilabeling may have connectivity less than $k$, and the completeness property breaks.

\begin{figure}[htbp]
  \begin{minipage}[t]{0.45\textwidth}
    \centering
    \begin{tikzpicture}
      \path (0,0) node (u1) [circle,fill=black!30,label=above:$u_1^\star$] {};
      \path (1.5,0) node (u2) [circle,fill=black!30,label=above:$u_2^\star$] {};
      \path (3,0) node (u3) [circle,fill=black!30,label=above:$u_3^\star$] {};

      \path (0,-1.5) node (w1s) [circle,fill=black!30,label=0:$v_1^\star$] {};
      \path (1.5,-1.5) node (w2s) [circle,fill=black!30,label=0:$v_2^\star$] {};
      \path (3,-1.5) node (w3s) [circle,fill=black!30,label=0:$v_3^\star$] {};

      \path (0,-2.5) node (w1) [circle,fill=black!30,label=-100:$v_{1}$] {};
      \path (1.5,-2.5) node (w2) [circle,fill=black!30,label=below:$v_{2}$] {};
      \path (3,-2.5) node (w3) [circle,fill=black!30,label=-80:$v_{3}$] {};

      \path (0.75, -4) node (t1) [circle,fill=black!50, label=below:$t_1$] {};
      \path (2.25, -4) node (t2) [circle,fill=black!50, label=below:$t_2$] {};

      \path (u1) -- (w1s) [->, draw, line width=\LW, color=Green];
      \path (u2) -- (w2s) [->, draw, line width=\LW, color=Green];
      \path (u3) -- (w3s) [->, draw, line width=\LW, color=Green];
      \path (u2) -- (w1s) [->, draw, line width=\LW, color=Orange];
      \path (u3) -- (w2s) [->, draw, line width=\LW, color=Orange];

      \path (w1s) -- (w1) [->, draw, line width=\LW];
      \path (w2s) -- (w2) [->, draw, line width=\LW];
      \path (w3s) -- (w3) [->, draw, line width=\LW];

      \path (w1) to [bend right=20] (t1) [->, draw, line width=\LW, color=Green];
      \path (w2) to [bend right=30] (t1) [->, draw, line width=\LW, color=Green];
      \path (w3) -- (t1) [->, draw, line width=\LW, color=Green];
      \path (w1) -- (t2) [->, draw, line width=\LW, color=Orange];
      \path (w2) to [bend left=30] (t2) [->, draw, line width=\LW, color=Orange];
    \end{tikzpicture}
    \caption{\small The terminal $t_1$ corresponds to the matching $\set{(1,1), (2,2), (3,3)}$ and $t_2$ corresponds to $\set{(2,1),(3,2)}$.}
    \label{fig:T-terminal}
  \end{minipage}
  \hfill
  \begin{minipage}[t]{0.45\textwidth}
    \centering
    \begin{tikzpicture}
      \path (0,0) node (r) [circle,fill=black!50,label=left:$r$] {};
      \path (0,-1.5) node (uia) [circle,fill=black!30,label=left:$u_i^a$] {};
      \path (1*\scaleA,-1.5) node (uiaa) [circle,fill=black!30,label=right:$u_{i'}^{a'}$] {};
      \path (r) -- (uia) [->, draw, line width=\LW, color=NavyBlue];
      \path (r) to [bend left=30](uiaa) [->, draw, line width=\LW, color=NavyBlue];
      \path (r) to [bend left=30](uia) [->, draw, dashed, line width=\LW];
      \path (r) to [bend left=55](uia) [->, draw, dashed, line width=\LW];
      \path (-1*\scaleA, -2.5) node (v1) [circle,fill=black!30,label=-90:$w_{i j}^{a b}$] {};
      \path ( 0, -2.5) node (v2) [circle,fill=black!30,label=-1:$w_{i j'}^{a b'}$] {};
      \path ( 1*\scaleA, -2.5) node (v3) [circle,fill=black!30,label=-1:$w_{i' j'}^{a' b'}$] {};
      \path (uia) -- (v1) [->, draw, line width=\LW];
      \path (uia) -- (v2) [->, draw, line width=\LW];
      \path (uiaa) -- (v3) [->, draw, line width=\LW];
      \path (-2*\scaleA, -3.5) node (w1) [circle,fill=black!30,label=right:$v_{j}^{b}$] {};
      \path ( 0*\scaleA, -3.5) node (w2) [circle,fill=black!30,label=right:$v_{j'}^{b'}$] {};
      \path (v1) -- (w1) [->, draw, line width=\LW];
      \path (v2) -- (w2) [->, draw, line width=\LW];
      \path (v3) -- (w2) [->, draw, line width=\LW];
      \path (-2*\scaleA, -4.5) node (wj) [circle,fill=black!30,label=right:$v_{j}$] {};
      \path (0,-4.5) node (wj2) [circle,fill=black!30,label=right:$v_{j'}$] {};
      \path (w1) -- (wj) [->, draw, line width=\LW, color=NavyBlue];
      \path (w2) -- (wj2) [->, draw, line width=\LW, color=NavyBlue];
      \path (-2*\scaleA, -5.5) node (t) [circle,fill=black!50,label=left:$t_m$] {};
      \path (2*\scaleA, -5.5) node (tt) [circle,fill=black!50,label=left:$t_{m'}$] {};
      \path (wj2) -- (t)  [->, draw, line width=\LW];
      \path (wj2) -- (tt)  [->, draw, line width=\LW];
      \path (wj) -- (t)  [->, draw, line width=\LW];
      \path (uia) to [bend right=90](t) [->, draw, dashed, line width=\LW];
      \path (v2) to [bend left=10] (t) [->, draw, dashed, line width=\LW];
    \end{tikzpicture}
    \caption{\small Padding arcs are dashed. The matching $\+E_m$ contains the edges $(u_i,v_j)$ and $(u_{i'},v_{j'})$, and the matching $\+E_{m'}$ contains the edge $(u_i, v_{j'})$.}
    \label{fig:T-padding-arcs}
  \end{minipage}
\end{figure}

Next, we prove the one-to-one correspondence between solutions to the two instances.
Wlog., assume that any solution to the $k$-DST problem contains all zero-cost arcs.

\paragraph*{Completeness.}
Given a feasible multilabeling $\sigma$ of the label cover instane $(\+G,\Sigma,\pi)$, we show that there is a corresponding feasible subgraph $H=(V,F)$ of $G=(V,E)$ such that $\cost(\sigma)=\cost(H)$.
The set $F$ consists of three types of arcs: 1) all zero-cost arcs in $G$; 2) the one-cost arcs $(r,u_i^a)$ for each $u_i\in \+U$ and $a\in\sigma(u_i)$; 3) the one-cost arcs $(v_j^b,v_j)$ for each $v_j\in \+V$ and $b\in \sigma(v_j)$.
Clearly, $\cost(H)=\cost(\sigma)$ and the definition of $F$ induces an injective mapping.

Next we prove the feasibility of $H$. That is, we will show that, for any terminal $t_m$, there exist $k$ edge-disjoint paths from $r$ to $t_m$.
We will construct a set of such paths, namely $P$.
%We specify $k$ such paths $P$ as follows.
Note that every arc entering $t_m$ must be contained in distinct path in $P$ because $\indeg_H(t_m)=k$.
%We divide these incoming arcs into four categories and select paths for $P$ accordingly
This gives four types of paths.

\begin{itemize}
  \item $(r,t_m)$: The arc itself forms a path from $r$ to $t_m$.

  \item $(u_i^a,t_m)$: We choose one of the zero-cost arcs $(r,u_i^a)$ to combine with $(u_i^a,t_m)$ to constitute a path $r\to u_i^a \to t_m$.
  Since $\+E_m$ is a matching, by construction we know that the arc $(u_i^a,t_m)$ has multiplicity one in $G$ and thus the paths in this category are edge-disjoint.
  After selecting paths in this way, there are still $\deg_\+G(u_i)-1$ unoccupied copies of the zero-cost arc $(r,u_i^a)$ if $u_i\in V(\+E_m)$; or $\deg_\+G(u_i)$ unoccupied copies otherwise.

  \item $(w_{ij}^{ab},t_m)$: We choose one of the zero-cost arcs $(r,u_i^a)$ and the arc $(u_i^a,w_{ij}^{ab})$ to constitute a path $r\to u_i^a \to w_{ij}^{ab} \to t_m$.
  If $u_i\in V(\+E_m)$, such paths use $\deg_\+G(u_i)-1$ copies of the zero-cost arc $(r,u_i^a)$, otherwise $\deg_\+G(u_i)$ copies are used.
  In both cases it is valid to do so because the first two categories of path leave enough copies.
  It is clear that the paths up to this point are edge-disjoint.

  \item $(v_j, t_m)$: In this case there is a unique $u_i\in \+U$ such that $(u_i,v_j)\in \+E_m$.
  It holds that $\pi_{u_i v_j}(a)=b$ for some $a\in \sigma(u_i),b\in \sigma(v_j)$ since $\sigma$ covers the edge $(u_i,v_j)$.
  We choose the one-cost arcs $(r,u_i^a)\in F$ and $(v_j^b, v_j)\in F$ to constitute a path $r\to u_i^a \to w_{ij}^{ab} \to v_j^b \to v_j \to t_m$.
  Since $\+E_m$ is a matching, the paths here are edge-disjoint.
  Note that previous paths only use arcs added in the final construction except for arcs of the form $(u_i^a, w_{ij}^{ab})$, while here we only use arcs from the base construction.
  The construction of $G$ guarantees $(u_i^a, w_{ij}^{ab})$ is not used by previous paths if $(u_i, v_j)\in \+E_m$.
\end{itemize}
Therefore, we selected $k$ edge-disjoint paths in $H$ from $r$ to $t_m$ successfully.

\paragraph*{Soundness.}

Given a $k$-connected subgraph $H=(V,F)$ (that contains all zero-cost arcs) of the $k$-DST instance $(k,G,r,T)$, we show that there is a corresponding feasible multilabeling $\sigma$ of the label cover instance $(\+G,\Sigma,\pi)$ such that $\cost(\sigma)=\cost(H)$.
The multilabeling $\sigma$ is specified by checking the one-cost arcs in $H$, i.e., set $\sigma(u_i)$ as $\set{a\in \Sigma: \text{the one-cost arc $(r,u_i^a)$ is in $F$}}$ for $u_i\in \+U$ and set $\sigma(v_j)$ as $\set{b\in \Sigma: (v_j^b, v_j)\in F}$ for $v_j\in \+V$.
Clearly, $\cost(\sigma) = \cost(H)$ and the definition of $\sigma$ induces an injective mapping.

We prove that there exist $a_i\in \sigma(u_i)$ and $b_j\in \sigma(v_j)$ such that $\pi_{u_i v_j}(a_i)=b_j$ for each edge $(u_i,v_j)\in\+E$.
Recall that $\+E$ is partitioned into $\Delta$ matchings $\+E_1,\+E_2,\ldots,\+E_\Delta$.
Let us fix an arbitrary matching $\+E_m$ and discuss the edges inside $\+E_m$.
Consider the set $S$ of arcs in $G$ coming into the terminal $t_m$.
By our construction, $S$ contains $k$ arcs of the following categories:
\begin{itemize}
    \item [1.] $|\+E_m|$ arcs of type $(v_j,t_m)$;
    \item [2.] $|\Sigma|\cdot(|\+E|-|\+E_m|)$ arcs of type $(w_{i'j'}^{a'b'}, t_m)$ (one for each $(u_{i'},w_{j'})\not\in\+E_m$ and $a',b'\in\Sigma$ such that $\pi_{u_{i'}v_{j'}}(a')=b'$);
    \item [3.] $|\Sigma|\cdot|\+E_m|$ arcs of type $(u_i^a,t_m)$ (one for each $(u_i,v_j)\in \+E_m$ and $a\in\Sigma$);
    \item [4.] $k-|\+E_m|-|\Sigma|\cdot|\+E|$ arcs of type $(r,t_m)$.
\end{itemize}

Let $P$ be the subgraph formed by the $k$ edge-disjoint paths from $r$ to $t_m$ in $H$.
The connectivity requirement forces that each arc in $S$ must belong to some path in $P$, so we can also categorize the paths in $P$ into the four types above.
Let $P_{2\text{-}4}$ be the subgraph consisting of all type-2,3,4 paths, and $P_i$ similarly.
We prove two observations:
\begin{claim}
We have the following facts for paths.
\begin{itemize}
    \item[I:] $\forall (u_{i'},v_{j'})\not\in \+E_m$, $\forall a',b'\in \Sigma$ such that $\pi_{u_{i'}v_{j'}}(a')=b'$, $(u_{i'}^{a'}, w_{i'j'}^{a'b'} )\in P_2 \not\in P_1$.
    \item[II:] $\forall (u_i,v_j)\in \+E_m$ and $a,b\in\Sigma$ such that $\pi_{u_iv_j}(a)=b$, there are  $\deg_{\+G}(u_{i})$ arcs $(r,u_i^a)$ in $P_{2\text{-}4}$.
\end{itemize}
\end{claim}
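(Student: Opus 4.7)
The plan is to exploit two structural features of the construction: several critical arcs have multiplicity one, which forces unique path prefixes, and the multiplicities of $(r,u_i^a)$ together with $k=\indeg_{G}(t_m)$ are chosen so that every in-arc of $t_m$ is saturated by a distinct edge-disjoint path of $P$. Both parts then reduce to a careful accounting of which paths can contain which arcs.

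For Part I, I would argue as follows. The arc $(w_{i'j'}^{a'b'}, t_m)\in S$ is a type-2 arc, so it is the last arc of some path in $P$, which is therefore a type-2 path. Inspecting the construction, the only in-arc of $w_{i'j'}^{a'b'}$ is $(u_{i'}^{a'}, w_{i'j'}^{a'b'})$, and by step 2 of the final construction this arc is added only once, hence has multiplicity one. The second-to-last arc of the type-2 path containing $(w_{i'j'}^{a'b'}, t_m)$ is therefore forced to be $(u_{i'}^{a'}, w_{i'j'}^{a'b'})$, giving $(u_{i'}^{a'}, w_{i'j'}^{a'b'}) \in P_2$. Edge-disjointness combined with multiplicity one immediately yields that this arc appears in no other path of $P$, and in particular not in any path of $P_1$.

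For Part II, I would count copies of $(r, u_i^a)$ consumed by $P_2\cup P_3 \subseteq P_{2\text{-}4}$. First, the type-3 arc $(u_i^a, t_m)\in S$ exists since $(u_i,v_j)\in \+E_m$, and the type-3 path containing it is forced to begin with $r\to u_i^a$ because the only in-arc of $u_i^a$ in $G$ is $(r, u_i^a)$; this contributes one copy. Since $\+E_m$ is a matching, $u_i$ appears in exactly one edge of $\+E_m$ (namely $(u_i, v_j)$), so its remaining $\deg_{\+G}(u_i)-1$ neighbors $v_{j'}$ in $\+G$ give rise to type-2 arcs $(w_{ij'}^{ab'}, t_m)$ with $b'=\pi_{u_iv_{j'}}(a)$, one per $v_{j'}$. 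By Part I, the type-2 path containing each such arc is forced to begin with $r\to u_i^a\to w_{ij'}^{ab'}$, using one copy of $(r, u_i^a)$ per arc, and edge-disjointness makes these copies pairwise distinct, accounting for $\deg_{\+G}(u_i)-1$ further copies. Hence at least $\deg_{\+G}(u_i)$ copies of $(r, u_i^a)$ lie in $P_{2\text{-}4}$, and since step 3 adds exactly $\deg_{\+G}(u_i)$ copies to $G$, all copies are used and the count is tight.

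The main obstacle is nothing fundamental, only keeping the case analysis tight. The one delicate point is that Part II genuinely needs Part I (to know that every type-2 path has the claimed forced prefix through $u_i^a$) combined with the matching property of $\+E_m$ (to count exactly $\deg_{\+G}(u_i)-1$ relevant type-2 arcs rather than $\deg_{\+G}(u_i)$). Although the claim only asserts saturation inside $P_{2\text{-}4}$, this saturation is exactly what the downstream soundness argument needs: it forces every path in $P_1$ to enter $v_j$ through some $w_{ij}^{ab}$ with $(r,u_i^a)$ already \emph{used}, so $P_1$ cannot borrow fresh copies of $(r, u_i^a)$, pinning each type-1 path to a genuine cover label in $\sigma$.
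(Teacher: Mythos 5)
Your proposal is correct and follows essentially the same route as the paper: every in-arc of $t_m$ is saturated by a distinct path of $P$, the unique in-arc of $w_{i'j'}^{a'b'}$ forces the type-2 paths' prefixes (giving Part I via edge-disjointness), and then counting the $\deg_{\+G}(u_i)-1$ type-2 paths plus the one type-3 path through $u_i^a$, each entering via a distinct copy of $(r,u_i^a)$, gives Part II. The only imprecision is the closing remark that "all copies are used and the count is tight": $G$ actually contains $\deg_{\+G}(u_i)+1$ copies of $(r,u_i^a)$ (the one-cost arc from the base construction plus the $\deg_{\+G}(u_i)$ zero-cost padding copies), but this is harmless since the claim, as used later in the soundness argument, only needs the lower bound of $\deg_{\+G}(u_i)$ copies in $P_{2\text{-}4}$.
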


\begin{proof}
    Note that we need a type-2 path from all $w_{i'j'}^{a'b'}$ such that $\pi_{u_{i'}v_{j'}}(a')=b'$, and using the arc $(u_{i'}^{a'}, w_{i'j'}^{a'b'})$ is the only way to enter $w_{i'j'}^{a'b'}$ to form a type-2 path.
    So $(u_{i'}^{a'}, w_{i'j'}^{a'b'})$ belongs to $P_2$ and not in $P_1$ (edge disjoint with $P_2$).

    For the second claim, there are $\deg_\+G(u_i)-1$ edges $(u_i,v_{j'})\in \+E\setminus \+E_m$.
    Plugging in Claim-I implies that there are $\deg_\+G(u_i)-1$ type-2 paths that use $u_i^a$.
    Moreover, because $(u_i,v_j)\in \+E_m$, there is another type-3 path that use $u_i^a$.
    So, in total there are $\deg_\+G(u_i)$ paths in $P_{2\text{-}4}$ that use $u_i^a$, and Claim-II follows.
\end{proof}

Then, fix an arbitrary $(u_i,v_j)\in\+E_m$, we claim that the type-1 paths $P_1$ induce $a,b\in \Sigma$ such that $\pi_{u_i v_j}(a)=b$.
Let $p$ be the type-1 path that goes through $v_j$.
For the path $p$ to enter $v_j$, it must go through a one-cost arc $(v_j^b,v_j)$ for some $b\in \Sigma$, and thus $b\in \sigma(v_j)$.
Then, there are two ways to enter $v_j^b$:
\begin{itemize}
    \item [1.] from $w_{i'j}^{a'b}$ for some $(u_{i'},v_{j})\notin\+E_m$ and $a'\in \pi_{u_{i'}v_j}^{-1}(b)$;
    \item [2.] from $w_{ij}^{ab}$ for $(u_{i},v_{j})\in\+E_m$ and some $a\in\pi_{u_i v_j}^{-1}(b)$.
\end{itemize}
The first way is infeasible because $(u_{i'}^{a'},w_{i'j}^{a'b})\not\in P_{1}$ due to Claim-I.
Hence, the only way is the second and $p$ must be exactly $r\to u_i^a\to w_{ij}^{ab}\to v_j^b\to v_j\to t_m$.
Putting together Claim-II and the edge-disjointness of $P_1$ and $P_{2\text{-}4}$, there are $\deg_{\+G}(u_i) + 1$ arcs $(r,u_i^a)$ in total in $P$.
Thus the one-cost arc $(r,u_i^a)$ must be included in $P\subseteq H$ because there are totally $\deg_{\+G}(u_i) + 1$ arcs $(r,u_i^a)$ in $G$ and thus $a\in\sigma(u_i)$.
Therefore, we conclude that the arbitrarily fixed edge $(u_i,v_j)$ is covered by $\pi_{u_i v_j}(a)=b$, where $a\in \sigma(u_i)$ and $b\in \sigma(v_j)$.

\paragraph*{Hardness Gap.}

The one-to-one correspondence between solutions to the two problems is established by collecting the proofs for completeness and soundness.
Furthermore, the reduction can be done in polynomial time in the size of the label cover instance and it guarantees that $|T|=\Delta$.
Plugging in \Cref{cor:T-polylog-hardness}, the following inapproximability result for the $k$-DST problem is obtained.

{\renewcommand{\thetheorem}{\ref{thm:main-T}}
\begin{theorem}
  For $k> |T|$, unless $\NP=\ZPP$, it is hard to approximate the $k$-DST problem to within a factor of $\Omega\tp{|T|/\log|T|}$.
\end{theorem}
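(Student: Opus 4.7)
The plan is to treat the theorem as the formal wrap-up of the reduction constructed earlier in the section. I would start from a label cover instance $(\+G,\Sigma,\pi)$ of maximum degree $\Delta$ supplied by \Cref{cor:T-polylog-hardness}, feed it into the reduction described above to obtain a $k$-DST instance $(k,G,r,T)$ with $|T|=\Delta$, and then transfer the approximation hardness across this reduction. Since the reduction is computable in time polynomial in the size of $(\+G,\Sigma,\pi)$ (which is itself polynomial in $\Delta$ and $|\Sigma|=O(\Delta^4/\polylog\Delta)$), a randomized $o(|T|/\log|T|)$-approximation for $k$-DST would, composed with this reduction, yield an $o(\Delta/\log\Delta)$-approximation for label cover, contradicting $\NP \neq \ZPP$.

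The technical content I would cite is the cost-preserving bijection already established in the completeness and soundness paragraphs. Completeness gives $\mathrm{OPT}(k\text{-}\mathrm{DST}) \le \mathrm{OPT}(\text{label cover})$ by exhibiting, from any feasible multilabeling $\sigma$, a feasible subgraph $H$ whose one-cost arcs are exactly the arcs encoding the labels in $\sigma$. Soundness gives the reverse inequality by reading off a multilabeling from the one-cost arcs of any feasible $H$ (together with all zero-cost arcs, which we may assume without loss of generality) and using the $k$ edge-disjoint $(r,t_m)$-paths to force, for every edge of the matching $\+E_m$, the existence of a cover path that witnesses a satisfying pair of labels. Hence $\mathrm{OPT}(k\text{-}\mathrm{DST}) = \mathrm{OPT}(\text{label cover})$, so the hardness gap transfers exactly.

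I would then verify the side condition $k > |T|$ required by the statement. By construction, $k = \max_m \indeg_G(t_m)$, and each terminal $t_m$ has at least $|\Sigma|\cdot|\+E|$ incoming arcs (from the padding layers alone), while $|T|=\Delta$; thus $k \gg |T|$ is automatic. The last bookkeeping step is to observe that \Cref{cor:T-polylog-hardness} gives hardness factor $\Omega(\Delta/\log\Delta)$, which translates to $\Omega(|T|/\log|T|)$ via $|T|=\Delta$.

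The step I expect to be the main obstacle is soundness, but it has already been handled: the delicate point is ruling out illegal $(r,t_m)$-paths that would correspond to no valid label pair. The padding construction forces every arc of the form $(w_{i'j'}^{a'b'},t_m)$ with $(u_{i'},v_{j'})\notin \+E_m$ to lie on its own type-2 padding path (Claim I), which eats up the incoming arc at $w_{i'j'}^{a'b'}$ and prevents the type-1 path through $v_j$ from entering $v_j^b$ via the ``wrong'' $w_{i'j}^{a'b}$. Combined with the matching property of $\+E_m$ (which guarantees that distinct cover paths for distinct edges of $\+E_m$ do not collide on shared arcs), this forces the unique surviving type-1 route $r\to u_i^a\to w_{ij}^{ab}\to v_j^b\to v_j\to t_m$, whence $(r,u_i^a)\in F$ and $(v_j^b,v_j)\in F$, completing the extraction of a satisfying label pair $(a,b)$ for $(u_i,v_j)$.
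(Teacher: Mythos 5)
Your proposal is correct and follows essentially the same route as the paper: it invokes the same matching-based reduction with padding arcs, relies on the established cost-preserving completeness/soundness correspondence (including Claim~I to kill illegal paths), and transfers the $\Omega(\Delta/\log\Delta)$ label cover gap of \Cref{cor:T-polylog-hardness} via $|T|=\Delta$. Your added check that $k=\max_m \indeg_G(t_m)\ge|\Sigma|\cdot|\+E|\gg\Delta=|T|$ correctly confirms the side condition $k>|T|$, which the paper leaves implicit.
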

\addtocounter{theorem}{-1} }

\section{Inapproximability in Terms of the Connectivity Requirement}
\label{sec:hardness-connectivity}

This section presents a hardness reduction, which is tailored for the approximation hardness in terms of the connectivity requirement $k$.
Our reduction again takes a label cover instance $(\+G,\Sigma,\pi)$ as an input and produces a $k$-DST instance $(k, G=(V,E), r, T)$.
As we wish to obtain an inapproximability in terms of $k$, the main focus is on controlling the size of $k$.

\paragraph*{Base Construction.}
Our reduction starts from a basic building block.
\begin{enumerate}
    \item 
    Let $G$ be an empty graph. We add a root $r$ to $G$ which will be the root of the $k$-DST instance.
    Then, for each vertex $v_j\in \+V$, create in $G$ a counterpart of it (also named $v_j$) and a set of vertices $B_j = \set{v_j^b: b\in \Sigma}$; connect each $v_j^b\in B_j$ to $v_j$ by an arc $(v_j^b, v_j)$ of cost one. 
    \item 
    For each vertex $u_i\in \+U$, create in $G$ a counterpart of it (also named $u_i$) and a set of vertices $A_i = \set{u_i^a: a\in \Sigma}$; connect $u_i$ to each $u_i^a\in A_i$ by an arc $(u_i,u_i^a)$ of cost one.
    \item For each edge $(u_i, v_j)\in \+E$, add to $G$ a terminal $t_{ij}$ and connect $v_j$ to $t_{ij}$ by a zero-cost arc $(v_j,t_{ij})$. For $a,b\in\Sigma$, connect $u_i^a\in A_i$ to $v_j^b\in B_j$ by a zero-cost arc if $\pi_{u_i v_j}(a)=b$.
\end{enumerate}

\paragraph*{Gadget of Reverse $d$-ary Arborescence.}

All arcs created hereafter have zero cost.
By \Cref{lem:induced-matching}, $\+E$ is partitioned into $\delta\le 2\Delta^2$ induced matchings $\+E_1,\+E_2,\ldots,\+E_\delta$.
Let $d\ge 2$ be an integral parameter to be determined later.
We add a complete reverse $d$-ary tree (arborescence) $Q$ of height $h=\lceil\log_d \delta\rceil$ below the graph $G$ mentioned above.
Then we use $q_i^j$ to denote the $j$-th vertex at the $i$-th layer. Note that our tree is a reverse tree. We join the counterpart in $G$ of each vertex $u_i \in V(\+E_j)\cap \+U$ to each leaf $q_h^j$ ($1\le j\le \delta$) of $Q$. See \Cref{fig:correct-T-arborescence-1} and \Cref{fig:correct-T-arborescence-2} for an illustration. Intuitively, if we add a padding path that passes $q_i^j$, all illegal paths that go through the subtree rooted at $q_i^j$ will get killed.

\begin{figure}[H]
  \begin{minipage}[t]{0.45\textwidth}
    \centering
    \begin{tikzpicture}
      \path (0,0)       node (u1) [circle, fill=black!30, label=above:$u_1$] {};
      \path (1.5,0)     node (u2) [circle, fill=black!30, label=above:$u_2$] {};
      \path (3,0)       node (u3) [circle, fill=black!30, label=above:$u_3$] {};
      \path (0,-1.5)    node (v1) [circle, fill=black!30, label=below:$v_1$] {};
      \path (1.5,-1.5)  node (v2) [circle, fill=black!30, label=below:$v_2$] {};
      \path (3,-1.5)    node (v3) [circle, fill=black!30, label=below:$v_3$] {};
      \path (u1) -- (v1) [draw, line width=\LW, color=Green];
      \path (u3) -- (v2) [draw, line width=\LW, color=Green];
      \path (u2) -- (v1) [draw, line width=\LW, color=Orange];
      \path (u3) -- (v3) [draw, line width=\LW, color=Orange];
      \path (u2) -- (v2) [draw, line width=\LW, color=Cyan];
    \end{tikzpicture}
    \caption{A bipartite graph $\+G$ with three induced matchings: $\+E_1=\set{(u_1,v_1),(u_3,v_2)}$, $ \+E_2=\set{(u_2,v_1),(u_3,v_3)}$, $\+E_3=\set{(u_2,v_2)}$.}
    \label{fig:correct-T-arborescence-1}
  \end{minipage}
  \hfill
  \begin{minipage}[t]{0.45\textwidth}
    \centering
    \begin{tikzpicture}
      \path (0, -3.5)      node (r)    [circle, fill=black!50,label=left:$q_0^1$] {};
      \path (-1*\scaleA,-2.5)     node (q11)  [circle, fill=black!30,label=left:$q_1^1$] {};
      \path ( 1*\scaleA,-2.5)     node (q12)  [circle, fill=black!30,label=left:$q_1^2$] {};
      \path (-1.5*\scaleA,-1.5)   node (q21)  [circle, fill=black!30,label=left:$q_2^1$] {};
      \path (-0.5*\scaleA,-1.5)   node (q22)  [circle, fill=black!30,label=left:$q_2^2$] {};
      \path ( 0.5*\scaleA,-1.5)   node (q23)  [circle, fill=black!30,label=left:$q_2^3$] {};
      \path ( 1.5*\scaleA,-1.5)   node (q24)  [circle, fill=black!30,label=left:$q_2^4$] {};
      \path (r) -- (q11) [<-, draw, line width=\LW];
      \path (r) -- (q12) [<-, draw, line width=\LW];
      \path (q11) -- (q21) [<-, draw, line width=\LW];
      \path (q11) -- (q22) [<-, draw, line width=\LW];
      \path (q12) -- (q23) [<-, draw, line width=\LW];
      \path (q12) -- (q24) [<-, draw, line width=\LW];
      \path (-1.5*\scaleA,  0) node (u1) [circle, fill=black!30, label=left:$u_1$] {};
      \path (   0*\scaleA, 0) node (u2) [circle, fill=black!30, label=left:$u_2$] {};
      \path ( 1.5*\scaleA, 0) node (u3) [circle, fill=black!30, label=left:$u_3$] {};
      \path (q21) -- (u1) [<-, draw, line width=\LW, color=Green];
      \path (q21) -- (u3) [<-, draw, line width=\LW, color=Green];
      \path (q22) -- (u2) [<-, draw, line width=\LW, color=Orange];
      \path (q22) -- (u3) [<-, draw, line width=\LW, color=Orange];
      \path (q23) -- (u2) [<-, draw, line width=\LW, color=Cyan];
    \end{tikzpicture}
    \caption{The gadget when $d=2$ with induced matchings $\+E_1,\+E_2$ and $\+E_3$.}
    \label{fig:correct-T-arborescence-2}
  \end{minipage}
\end{figure}

\paragraph*{Final Construction.}

In the  final construction, we add some \emph{padding arcs} to the based graph. These arcs form  \emph{padding paths}, which then kill all the \emph{illegal paths} through the help of the reverse $d$-ary arborescence. For notational convenience, we still use $Q$ to denote the original arborescence.
\begin{enumerate}
    \item For each $1\le i \le h$ and $j$, we add an arc $(r,q_i^j)$ to $G$.
    \item Then we create a vertex $q_0^{1'}$ and connect from $q_0^1$. For each terminal $t_{ij}$, we add a zero-cost arc from $q_0^{1'}$ to $t_{ij}$.
    \item (Refer to \Cref{fig:correct-k-padding-arcs}) For each terminal $t_{ij}$, suppose the edge $(u_i,v_j)$ is in the group $\+E_m$ (corresponding to $q_h^m)$.
    There is a unique path in the reverse $d$-ary arborescence from $q_h^m$ to $q_0^1$: $q_h^{j_h=m} \to \cdots \to q_2^{j_2} \to q_1^{j_1} \to q_0^{j_0=1}$. 
    At each level $1\leq \ell\leq h$, there are $d-1$ siblings of $q_{\ell}^{j_{\ell}}$, and for each sibling $q_{\ell}^{j\neq j_{\ell}}$, we add a zero-cost arc from $q_{\ell}^{j}$ to the terminal $t_{ij}$.
\end{enumerate}
After the construction, the in-degree of each terminal $t_{ij}$ is $h(d-1)+1$ and we set $k$ as it.

\paragraph*{Why Does it Work under Grouping by Induced Matchings?}
For any edge $(u_i,v_j)\in \+E_m$, now we can promise that there must be a path from $r$ to $t_{ij}$ going through the leaf $q_h^m$ and the vertex $u_i\in V(G)$.
Since the path in the arboresecence is unique, we only need to examine those illegal paths are bypassing some $u_i'$.
Suppose that the illegal path is $r \to v_j \to \cdots \to u_{i'} \to q_h^{j_h=m} \to \cdots \to q_{1}^{j_1} \to q_0^1 \to q_0^{1'} \to t_{ij} $.
The existence of such $u_{i'}$ implies that $(u_{i'}, v_j)\in \+E_m$, which contradicts that $\+E_m$ is an induced matching.

\begin{figure}[H]
  \centering
  \begin{tikzpicture}
    \path (0, -5)      node (r)    [circle, fill=black!50,label=above:$q_0^1$] {};
    \path (0, -6)      node (rr)    [circle, fill=black!50,label=below:$q_0^{1'}$] {};
    \path (-1*\scaleB,-3.5)     node (q11)  [circle, fill=black!30,label=left:$q_1^1$] {};
    \path ( 1*\scaleB,-3.5)     node (q12)  [circle, fill=black!30,label=right:$q_1^2$] {};
    \path (-1.5*\scaleB,-2)   node (q21)  [circle, fill=black!30,label=left:$q_2^1$] {};
    \path (-0.5*\scaleB,-2)   node (q22)  [circle, fill=black!30,label=left:$q_2^2$] {};
    \path ( 0.5*\scaleB,-2)   node (q23)  [circle, fill=black!30,label=left:$q_2^3$] {};
    \path ( 1.5*\scaleB,-2)   node (q24)  [circle, fill=black!30,label=left:$q_2^4$] {};
    \path (r) -- (rr)  [->, draw, line width=\LW];
    \path (r) -- (q11) [<-, draw, line width=\LW];
    \path (r) -- (q12) [<-, draw, line width=\LW];
    \path (q11) -- (q21) [<-, draw, line width=\LW];
    \path (q11) -- (q22) [<-, draw, line width=\LW];
    \path (q12) -- (q23) [<-, draw, line width=\LW];
    \path (q12) -- (q24) [<-, draw, line width=\LW];
    \path (-1.5*\scaleB, 0) node (u1) [circle, fill=black!30, label=left:$u_1$] {};
    \path (   0*\scaleB, 0) node (u2) [circle, fill=black!30, label=left:$u_2$] {};
    \path ( 1.5*\scaleB, 0) node (u3) [circle, fill=black!30, label=right:$u_3$] {};
    \path (q21) -- (u1) [<-, draw, line width=\LW];
    \path (q21) -- (u3) [<-, draw, line width=\LW];
    \path (q22) -- (u2) [<-, draw, line width=\LW];
    \path (q22) -- (u3) [<-, draw, line width=\LW];
    \path (q23) -- (u2) [<-, draw, line width=\LW];
    \path (-1.5*\scaleB, -6.5) node (t11) [circle, fill=black!50, label=left:$t_{11}$] {};
    \path (rr) -- (t11)  [->, draw, line width=\LW];
    \path (q22) to [bend left=10](t11) [->, dashed, draw, line width=\LW];
    \path (q12) to [bend left=15](t11) [->, dashed, draw, line width=\LW];
  \end{tikzpicture}
  \caption{Padding arcs for the terminal $t_{11}$ are dashed. Note that the vertices $u_i$ have incoming arcs from $\+V$, but not drawn here for simplicity.}
  \label{fig:correct-k-padding-arcs}
\end{figure}

\paragraph*{Completeness.}

Given a feasible multilabeling $\sigma$ of the instance $(\+G,\Sigma,\pi)$, we show that there is a corresponding $k$-connected subgraph $H=(V,F)$ of $G=(V,E)$ such that $\cost(\sigma)=\cost(H)$.
The set $F$ consists of three types of arcs: 1) all zero-cost arcs in $G$; 2) the one-cost arcs $(u_i, u_i^a)$ for each $u_i\in \+U$ and $a\in \sigma(u_i)$; 3) the one-cost arcs $(v_j^b,v_j)$ for each $v_j\in \+V$ and $b\in \sigma(v_j)$.
Clearly, $\cost(H)=\cost(\sigma)$ and the definition of $F$ induces an injective mapping.

Now we prove that for any terminal $t_{ij}\in T$ there are $k$ edge-disjoint paths, denoted by $\+P$, in $H$ from $r$ to $t_{ij}$.
Let $\+E_m$ be the induced matching that contains $t_{ij}$.
There is a unique path $p=(q_h^{j_h=m} \to \cdots \to q_1^{j_1} \to q_0^{j_0=1}) $ in the arborescence $Q$ from $q_h^m$ to $q_0^1$.
Since $\indeg_H(t_{ij})=k$, each arc entering $t_{ij}$ must be contained in $\+P$.
We consider two cases categorized by the last arc in the path:

\begin{enumerate}
    \item
    $(q_0^{1'}, t_{ij})$: The feasibility of $\sigma$ induces $a\in \sigma(u_{i})$ and $b\in \sigma(v_j)$ such that $\pi_{u_i v_j}(a)=b$, so that $(u_i^a,u_i),(v_j^b, u_i^a),(v_j,v_j^b)\in F$.
    Guided by the path $p$, we add to $P$ the path $r \to v_j \to v_j^b \to u_i^a \to u_i \to p=(q_h^{j_h=m} \to \cdots \to q_1^{j_1} \to q_0^{j_0=1}) \to q_0^{1'} \to t_{ij}$.
    
    \item
    $(q_\ell^{j'},t_{ij})$: Here $j'\neq j_{\ell}$. We add to $P$ the path $r\to q_{\ell}^{j'} \to t_{ij}$.
\end{enumerate}

For $0\le \ell< h$, the type-1 path goes through $q_{\ell+1}^{j_{\ell+1}} \to q_{\ell}^{j_{\ell}}$, while each type-2 path goes through $r\to q_{\ell}^{j'}\to t_{ij}$ for some $j'$.
Since $j'\neq j_{\ell}$, the type-1 path and the type-2 paths do not share any common arc.
For the type-2 paths themselves, there are $d-1$ arcs of $(r,q_{\ell}^{j'})$ where $j' \neq j_{\ell}$ for each $0 < \ell \leq h$, which are guaranteed in the second step of the final construction of $G$.

\paragraph*{Soundness.}

Given a $k$-connected subgraph $H=(V,F)$ (that contains all zero-cost arcs) of the $k$-DST instance $(k,G,r,T)$, we show that there is a corresponding feasible multilabeling $\sigma$ of the label cover instance $(\+G,\Sigma,\pi)$ such that $\cost(\sigma)=\cost(H)$.
We define $\sigma$ as follows: set $\sigma(u_i)$ as $\set{a\in \Sigma: (u_i,u_i^a)\in F}$ for $u_i\in \+U$ and set $\sigma(v_j)$ as $\set{b\in \Sigma: (v_j^b, v_j)\in F}$ for $v_j\in \+V$.
Clearly, $\cost(\sigma) = \cost(H)$ and the definition of $\sigma$ induces an injective mapping.
Then we prove that $\sigma$ covers all the edges in $\+E$.

Consider an edge $(u_i,v_j)\in \+E_m$ and its corresponding terminal $t_{ij}$.
Let $p=(q_h^{j_h=m} \to \cdots \to q_1^{j_1} \to q_0^{j_0=1})$
be the unique path in the arborescence $Q$ from $q_h^m$ to $q_0^1$.
Let $\+P$ be any set of $k$ edge-disjoint paths from $r$ to $t_{ij}$ in $H$.
The fact that $\indeg_G(t_{ij})=k=h(d-1)+1$ forces $\+P$ to contain all arcs entering $t_{ij}$.
These arcs are of two types:
\begin{enumerate}
    \item One arc $(q_0^{1'},t_{ij})$;
    \item $h(d-1)$ arcs of $(q_{\ell}^{j'}, t_{ij})$ for $0< \ell \leq h$ and $j' \neq j_{\ell}$.
\end{enumerate}
Let $P_1$ be the only path in $\+P$ that uses the type-1 arc, and let $P_2$ be the union of paths in $\+P$ that use type-2 arcs.
By backtracking the paths in $P_2$ from $t_{ij}$, it holds, for $0\le \ell < h$ and $j'\neq j_{\ell+1}$, that $(q_{\ell}^{j'}, t_{ij})\in P_2$.
Thus, the path $P_1$ has to be $r \to v_j \to \cdots \to u_i \to q_h^{j_h=m} \to \cdots \to q_1^{j_1} \to q_0^{j_0=1}\to q_{0}^{1'}\to t_{ij}$.
Let us backtrack $P_1$ from $u_i$.
The previous vertex must be $u_i^a$ for some $a\in \Sigma$, and $v_{j'}^b$ for some $b\in \Sigma$, and $v_{j'}$, and then $r$.
If $i'\neq i$, then by the construction of $G$, we know that $v_{j'}\in V(\+E_m)$.
However, it also holds that $u_i \in V(\+E_m)$ because $(u_i,v_j)\in \+E_m$.
Thus, the induced subgraph on $V(\+E_m)$ contains both $(u_i,v_{j'})$ and $(u_{i},v_j)$, contradicting the matching property of $\+E_m$.
Therefore, $i' = i$, implying that $a\in \sigma(u_i),b\in\sigma (v_j)$ and $\pi_{u_i v_j}(a)=b$ because $(u_i^a, u_i),(v_j^b, u_i^a),(v_j, v_j^b)\in P_1\subseteq \+P\subseteq H$.

\paragraph*{Hardness Gap.}

In this subsection, we deduce approximation hardness for $k$-DST.
Clearly, the reduction can be made in polynomial time in the size of the input label cover instance.
By setting different values of the parameter $d$, we have the following two propositions.

{\renewcommand{\thetheorem}{\ref{thm:main-k-L}}
\begin{theorem}
  It is hard to approximate the $k$-DST problem on $L$-layered graphs $G=(V,E)$ for $\Omega(1) \le L\le O\tp{\log |V|}$ to within a factor of $\Omega\tp{\tp{k/L}^{(1-\epsilon)L/8-2}}$ for any constant $\epsilon > 0$, unless $\NP=\ZPP$.
    \end{theorem}
\addtocounter{theorem}{-1} }

\begin{proof}
Let $L$ be the height (the maximum length of paths) of the underlying graph in the $k$-DST instance.
Recall that in the construction we have a modified $d$-ary arborescence and $6$ base levels with vertices $v_j$, $v_j^b$, $u_i^a$, $u_i$, $q_{0}^{1'}$ and $t_{ij}$.
So, $L=\lceil \log_d \delta \rceil + 6$.
By \Cref{lem:induced-matching}, $\delta$ is at most $2\Delta^2$.
Thus, $L\le \lceil \log_d (2\Delta^2) \rceil + 6$.
To complete the proof, we add some dummy vertices to the modified $d$-ary arborescence so that the height becomes exactly $\lceil \log_d (2\Delta^2) \rceil + 6$.

We fix $d=\Delta^x\ge 2$ for some $0<x\le 1$.
Then $k=\lceil \log_d \delta \rceil (d-1)+1 \le (d-1)(2\log \Delta /\log d + 1/\log d + 1) + 1\le \tp{2/x+2}\Delta^x$.
We also have that $4/x \le L = \lceil \log_d (2\Delta^2) \rceil + 6 \le 4/x+8$.
Therefore, $(k/L)^{\frac{(1-\epsilon)L}{8}-2} \le \Delta^{1-\epsilon}$, and we have the claimed result on layered graphs by plugging in \Cref{cor:T-0.99-hardness}.
The parameter $x$ can be used to obtain specific values of $L$ that we want.
\end{proof}

{\renewcommand{\thetheorem}{\ref{thm:main-k}}
\begin{theorem}
  For $k<\abs{T}$, it is hard to approximate the $k$-DST problem to within a factor of $\Omega\tp{2^{k/2}/k}$, unless $\NP=\ZPP$.
\end{theorem}
\addtocounter{theorem}{-1} }

\begin{proof}
    Recall that $k=h(d-1)+1 = \lceil \log_d \delta \rceil (d-1) + 1$ where $\delta \le 2\Delta^2$.
    If we set $d=2$, then $k=\lceil \log \delta \rceil + 1\le 2\log \Delta +3$.
    If $k$ is too small, then we can add dummy arcs from $r$ to terminals to make $k\ge \log\Delta$.
    Plugging in \Cref{cor:T-polylog-hardness}, the claimed hardness gap follows.
\end{proof}

\section{Inapproximability for $k$-GST}
\label{sec:hardness-undirected-group}

In this section, we consider the $k$-edge-connected group Steiner tree ($k$-GST) problem.
An instance of the problem is a tuple $(k,G,r,\+T)$ where $k$ is the connectivity requirement, $G=(V,E)$ is an undirected graph with edge weight (or cost) $\cost: E\to \mathbb{Q}_{\ge 0}$, $r\in V$ is the root and $\+T=\set{T_m\subseteq V\setminus \set{r}: 1\le m\le q}$
are $q$ groups of vertices.
The goal is to find a subgraph $H=(V,F)$ of minimum cost $\cost(H)=\sum_{e\in F}\cost(e)$ such that for each group $T_m$ there are $k$ edge-disjoint paths in $H$ from $r$ to $T_m$.

We reduce a label cover instance $(\+G,\Sigma,\pi)$ to a $k$-GST instance $(k, G=(V,E), r, \set{T_m:1\le m\le q})$ in polynomial time.
For the ease of presentation, assume that each group $T_m$ has its own connectivity requirement $k_m\le k$, i.e., only $k_m$ edge-disjoint paths from $r$ to $T_m$ are required.
This non-uniform version can be reduced to the uniform version by adding $k-k_m$ zero-cost edges to an arbitrary vertex in $T_m$.

\paragraph*{Reduction.}
First, let $G$ a graph with a single vertex, the root $r$.
\begin{itemize}
    \item For each vertex $u_i\in \+U$, we add to $G$ a set of vertices $A_i=\set{u_i^a: a\in \Sigma}$, and we join $r$ to each vertex $u_i^a\in A$ by an edge $(r,u_i^a)$ of cost one.
    \item For each vertex $v_j\in \+V$, we add to $G$ vertices $v_j,\widetilde{v}_j$ and a set of vertices $B_j=\set{v_j^b: b\in \Sigma}$, and we join each vertex $v_j^b\in B_j$ to $v_j$ by an edge $(v_j^b, v_j)$ of cost one, and we join $v_j$ to $\widetilde{v}_j$ by an edge $(v_j,\widetilde{v}_j)$ of cost zero.
    All edges hereafter are of cost zero.
    \item For each edge $(u_i,v_j)$ of $\+G$ and $a,b\in \Sigma$ that $\pi_{u_i v_j}(a)=b$, we create a gadget between $u_i^a$ and $v_j^b$ by the following gadget.
\end{itemize}

\paragraph*{Gadget.} (Refer to \Cref{fig:kgst-gadget})

The gadget consists of ten vertices $\set{x^{ab}_{ijf}: 1\le f\le 5}\cup \set{y^{ab}_{ijf}: 1\le f\le 5}$.
\begin{itemize}
    \item The edges among these vertices are
        \begin{align*}
          \set{(x^{ab}_{ijf}, x^{ab}_{ijf'}): (f,f')\in S}
          \cup
          \set{(y^{ab}_{ijf}, y^{ab}_{ijf'}): (f,f')\in S}
          \cup \set{(u_i^a, x^{ab}_{ij1}),(x^{ab}_{ij2}, y^{ab}_{ij2}),(y^{ab}_{ij1}, v_j^b)}
        \end{align*}
        where $S=\set{(1,2),(2,3),(3,4),(1,5)}$.
    \item In addition to the edges above, we connect $r$ to the gadget by edges $(r,x^{ab}_{ij3})$ and $(r,y^{ab}_{ij3})$.
\end{itemize}

\begin{figure}[H]
  \centering
  \begin{tikzpicture}
    \path (0,0) node (r) [circle,fill=black!30,label=right:$r$] {};
    \path (0,-1) node (uia) [circle,fill=black!50,label=right:$u_i^a$] {};
    \path (r) to (uia) [draw, line width=\LW];

    \path (0,-2) node (x1) [circle,fill=black!30,label=left:$x^{ab}_{ij1}$] {};
    \path (0,-3) node (x2) [circle,fill=black!30,label=right:$x^{ab}_{ij2}$] {};
    \path (-1.5,-3) node (x3) [circle,fill=black!30,label=-3:$x^{ab}_{ij3}$] {};
    \path (-3,-3) node (x4) [circle,fill=black!30,label=above:$x^{ab}_{ij4}$] {};
    \path (1,-2) node (x5) [circle,fill=black!30,label=right:$x^{ab}_{ij5}$] {};
    \path (x1) to (x2) [draw, line width=\LW];
    \path (x2) to (x3) [draw, line width=\LW];
    \path (x3) to (x4) [draw, line width=\LW];
    \path (x1) to (x5) [draw, line width=\LW];
    \path (uia) to (x1) [draw, line width=\LW];

    \path (0,-5) node (y1) [circle,fill=black!30,label=left:$y^{ab}_{ij1}$] {};
    \path (0,-4) node (y2) [circle,fill=black!30,label=right:$y^{ab}_{ij2}$] {};
    \path (-1.5,-4) node (y3) [circle,fill=black!30,label=below:$y^{ab}_{ij3}$] {};
    \path (-3,-4) node (y4) [circle,fill=black!30,label=below:$y^{ab}_{ij4}$] {};
    \path (1,-5) node (y5) [circle,fill=black!30,label=right:$y^{ab}_{ij5}$] {};
    \path (y1) to (y2) [draw, line width=\LW];
    \path (y2) to (y3) [draw, line width=\LW];
    \path (y3) to (y4) [draw, line width=\LW];
    \path (y1) to (y5) [draw, line width=\LW];
    \path (x2) to (y2) [draw, line width=\LW];

    \path (0,-6) node (wjb) [circle,fill=black!50,label=right:$v_j^b$] {};
    \path (y1) to (wjb) [draw, line width=\LW];

    \path (r) to [bend right=10] (x3) [draw, line width=\LW];
    \path (r) to [bend right=90] (y3) [draw, line width=\LW];

    \path (0,-7) node (wj) [circle,fill=black!30,label=right:$v_j$] {};
    \path (0,-8) node (wjt) [circle,fill=black!30,label=right:$\widetilde{v}_{j}$] {};
    \path (wjb) to (wj)  [draw, line width=\LW];
    \path (wj) to (wjt)  [draw, line width=\LW];
  \end{tikzpicture}
  \caption{The gadget connecting $u_i^a$ and $v_j^b$.}
  \label{fig:kgst-gadget}
\end{figure}

\paragraph*{Terminal Groups.} Finally, we specify the terminal groups $\+T$ by applying \Cref{lem:matching} to partition $\+E$ into $\Delta$ matchings $\+E_1,\+E_2,\ldots,\+E_\Delta$.
For each matching $\+E_m$ we construct a group $T_m$ as follows:
\begin{align*}
  T_m = \set{\widetilde{v}_j: v_j\in V(\+E_m)\cap \+V}
  &\cup
  \set{x^{ab}_{ij4}: (u_i,v_j)\in \+E_m, a,b\in \Sigma, \pi_{u_i v_j}(a)=b} \\
  &\cup
  \set{y^{ab}_{ij4}: (u_i,v_j)\in \+E_m, a,b\in \Sigma, \pi_{u_i v_j}(a)=b} \\
  &\cup
  \set{x^{ab}_{ij5}: (u_i,v_j)\not\in \+E_m, a,b\in \Sigma, \pi_{u_i v_j}(a)=b} \\
  &\cup
  \set{y^{ab}_{ij5}: (u_i,v_j)\not\in \+E_m, a,b\in \Sigma, \pi_{u_i v_j}(a)=b}.
\end{align*}
We set $k_m=\abs{T_m}$.
Note that $\deg_G(v)=1$ for any $v\in T_m$.

\paragraph*{Completeness.}

Given a feasible multilabeling $\sigma$ of the instance $(\+G,\Sigma,\pi)$, we show that there is a corresponding $k$-connected subgraph $H=(V,F)$ of $G=(V,E)$ such that $\cost(\sigma)=\cost(H)$.
The set $F$ consists of three types of edges: 1) all zero-cost edges in $G$; 2) the one-cost edges $(r,u_i^a)$ for each $u_i\in \+U$ and $a\in \sigma(u_i)$; 3) the one-cost edges $(v_j^b,v_j)$ for each $v_j\in \+V$ and $b\in \sigma(v_j)$.
Clearly, $\cost(H)=\cost(\sigma)$ and the definition of $F$ induces an injective mapping.

We prove that there are $k_m$ edge-disjoint paths from $r$ to each group $T_m$.
Since $k_m=\abs{T_m}$ and $\deg_G(v)=1$ for any $v\in T_m$, it is sufficient to show that there is a path from $r$ to each vertex $v\in T_m$.
We specify the paths as follows.
(Please see \Cref{fig:kgst-gadget} for illustration.)
\begin{enumerate}
  \item For $\widetilde{v}_j\in T_m$, we know that there is some $i$ such that $(u_i,v_j)\in \+E_m$.
  The feasibility of $\sigma$ implies that there are $a\in \sigma(u_i),b\in \sigma(v_j)$ such that $\pi_{u_i v_j}(a)=b$.
  We choose the path $(r,u_i^a,x^{ab}_{ij1},x^{ab}_{ij2},y^{ab}_{ij2},y^{ab}_{ij1},v_j^b,v_j,\widetilde{v}_j)$.
  \item
  For any $x^{ab}_{ij4}\in T_m$, we choose the path $(r,x^{ab}_{ij3},x^{ab}_{ij4})$.
  Similarly $(r,y^{ab}_{ij3},y^{ab}_{ij4})$ for any $y^{ab}_{ij4}\in T_m$.
  \item
  For any $x^{ab}_{ij5}\in T_m$, we choose the path $(r,x^{ab}_{ij3},x^{ab}_{ij2},x^{ab}_{ij1},x^{ab}_{ij5})$.
  Similarly, we choose the path $(r,y^{ab}_{ij3},y^{ab}_{ij2},y^{ab}_{ij1},y^{ab}_{ij5})$ for any $y^{ab}_{ij5}\in T_m$.
\end{enumerate}

It is straightforward to verify that paths within each type are edge-disjoint.
The edge-disjointness between paths in type-2 and type-3 is guaranteed by the fact that for an edge $(u_i,v_j)\in \+E$ and arbitrary $a,b\in \Sigma$ that $\pi_{u_i v_j}(a)=b$, $x^{ab}_{ij4}$ ($y^{ab}_{ij4}$) and $x^{ab}_{ij5}$ ($y^{ab}_{ij5}$) cannot be contained in $T_m$ at the same time.
The edge-disjointness between type-1 paths and type-2/3 paths is guaranteed by the matching property of $\+E_m$.

\paragraph*{Soundness.}
Given a $k$-connected subgraph $H=(V,F)$ (that contains all zero-cost edges) of the $k$-GST instance, there is a corresponding feasible multilabeling $\sigma$ of cost $\cost(H)$ of the label cover instance.
The multilabeling $\sigma$ is specified by checking the one-cost edges in $H$, i.e., set $\sigma(u_i)$ as $\set{a\in \Sigma: (r,u_i^a)\in F}$ for $u_i\in \+U$ and set $\sigma(v_j)$ as $\set{b\in \Sigma: (v_j^b, v_j)\in F}$ for $v_j\in \+V$.
Clearly, $\cost(\sigma) = \cost(H)$ and the definition of $\sigma$ induces an injective mapping.

Recall that $\+E$ is partitioned into $\Delta$ matchings $\+E_1,\+E_2,\ldots,\+E_\Delta$.
For an edge $(u_i,v_j)\in \+E$, there is a unique $m$ such that $(u_i,v_j)\in \+E_m$.
The feasibility of $H$ means that there are $k_m$ edge-disjoint paths from the root to the group $T_m$.
Since $k_m=\abs{T_m}$ and $\deg_G(v)=1$ for any $v\in T_m$, there is a path from the root to every vertex in the group $T_m$.
In particular, there is a path $p$ from the root to the vertex $\widetilde{v}_j$.
We claim that the path $p$ is of form $(r,u_i^a,x^{ab}_{ij1},x^{ab}_{ij2},y^{ab}_{ij2},y^{ab}_{ij1},v_j^b,v_j,\widetilde{v}_j)$ for some $a,b\in \Sigma$ that $\pi_{u_i v_j}(a)=b$.
This would imply that the constraint $\pi_{u_i v_j}$ is covered by the assignment $\sigma$.
We prove this claim by backtracking the path $p$ from the last vertex $\widetilde{v}_j$.
For notational convenience, let $v_{-i}$ denote the last $i$th vertex in the path $p$ (where $v_{0}=\widetilde{v}_j$).

\begin{enumerate}
  \item Since $\deg_G(\widetilde{v}_j)=1$, the vertex $v_{-1}$ has to be the only neighbor $v_j$ of $\widetilde{v}_j$.
  \item The neighbors of $v_j$ other than $\widetilde{v}_j$ are $\set{v_j^b:b\in \Sigma}$.
  So, there exists some $b\in \Sigma$ such that $v_{-2}=v_j^b$.
  \item The neighbors of $v_j^b$ are $\set{y^{ab}_{ij1}: a\in \pi_{u_i v_j}^{-1}(b)}\cup \set{y^{a'b}_{i'j1}: (u_{i'},v_j)\not\in \+E_m, a'\in \pi_{u_{i'}v_j}^{-1}(b)}$.
  It follows from the definition of $T_m$ that $y^{a'b}_{i'j5}\in T_m$.
  The vertex $y^{a'b}_{i'j5}$ has only one neighbor $y^{a'b}_{i'j1}$.
  So, any path $p'$ in $H$ from $r$ to $y^{a'b}_{i'j5}$ has to pass $y^{a'b}_{i'j1}$.
  If $v_{-3}=y^{a'b}_{i'j1}$, then there are two edge-disjoint paths $p$ and $p'$ passing $y^{a'b}_{i'j1}$.
  This implies that $\deg_G(y^{a'b}_{i'j1})\ge 4$, contradicting the fact that $\deg_G(y^{a'b}_{i'j1})=3$.
  Therefore $v_{-3}=y^{ab}_{ij1}$.
  \item The neighbors of $y^{ab}_{ij1}$ are $v_j^b,y^{ab}_{ij5}$ and $y^{ab}_{ij2}$.
  The first one is already used in $p$ and the second one is a degree-one vertex so it has to be $v_{-4}=y^{ab}_{ij2}$.
  \item Note that $y^{ab}_{ij4}\in T_m$.
  It is easy to show that $v_{-5}=x^{ab}_{ij2}$ using arguments similar to that in the case of $v_{-3}$.
  \item By repeating the same arguments as above, $p$ is characterized as what we claimed.
\end{enumerate}

\paragraph*{Hardness Gap.}
The construction size is polynomial in the size of the label cover instance.
Furthermore, the reduction guarantees $|\+T|=\Delta$.
Plugging in \Cref{cor:T-polylog-hardness}, we have the following hardness result for the $k$-GST problem.

\begin{theorem} \label{thm:main-kgst}
  For $k\ge |\+T|$, unless $\NP=\ZPP$, it is hard to approximate the $k$-GST problem to within a factor of $\Omega\tp{|\+T|/\log|\+T|}$.
\end{theorem}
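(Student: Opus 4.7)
The plan is to conclude the hardness by composing the cost-preserving bijection between feasible label cover multilabelings and feasible $k$-GST solutions (established by the reduction above) with the label cover hardness from \Cref{cor:T-polylog-hardness}. Since the reduction takes an instance $(\+G,\Sigma,\pi)$ of maximum degree $\Delta$ and produces a $k$-GST instance with exactly $|\+T|=\Delta$ groups in polynomial time, an $\alpha$-approximation algorithm for $k$-GST yields an $\alpha$-approximation algorithm for label cover at degree parameter $\Delta=|\+T|$. Setting $\alpha=o(|\+T|/\log|\+T|)$ would then contradict $\NP\neq\ZPP$.

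First I would verify that the bijection is well-defined and cost-preserving in both directions. The completeness direction above constructs $H$ from $\sigma$ by selecting the one-cost edges $(r,u_i^a)$ for $a\in\sigma(u_i)$ and $(v_j^b,v_j)$ for $b\in\sigma(v_j)$, and routes $|T_m|$ edge-disjoint $(r,T_m)$-paths using the matching property of $\+E_m$ together with the fact that every group vertex has degree one in $G$; cost equality $\cost(H)=\cost(\sigma)$ is then immediate. The hard part is soundness, where the key obstacle is to rule out \emph{illegal} routings of the mandatory $(r,\widetilde{v}_j)$-path through a gadget corresponding to some $(u_{i'},v_j)\notin\+E_m$. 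This is where the trap terminals $x_{ij5}^{ab},y_{ij5}^{ab}$ earn their keep: they lie in $T_m$ precisely for edges \emph{outside} $\+E_m$ and have degree one, so any feasible $H$ must route edge-disjoint paths to them that monopolize the low-degree junction vertices $y_{i'j1}^{a'b}$ and $x_{i'j1}^{a'b}$. A step-by-step backtrack of the $\widetilde{v}_j$-path, using the degree bound $\deg_G(y_{i'j1}^{a'b})=3$ (and its $x$-analog) at each step, forces the path to traverse $u_i^a, x_{ij1}^{ab}, x_{ij2}^{ab}, y_{ij2}^{ab}, y_{ij1}^{ab}, v_j^b$ with $\pi_{u_iv_j}(a)=b$, which witnesses that $\sigma$ covers $(u_i,v_j)\in\+E_m$. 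Repeating this across the $\Delta$ matchings shows $\sigma$ covers all of $\+E$.

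Finally, I would verify the side condition $k\ge|\+T|$: since $k=\max_m k_m=\max_m|T_m|$ and each $T_m$ contains the degree-one trap vertices $x_{ij5}^{ab},y_{ij5}^{ab}$ for \emph{all} edges $(u_i,v_j)\notin\+E_m$ together with compatible label pairs, $|T_m|$ is of order $|\Sigma|\cdot(|\+E|-|\+E_m|)$, which comfortably exceeds $\Delta=|\+T|$. Plugging $\Delta=|\+T|$ into \Cref{cor:T-polylog-hardness} then yields the claimed $\Omega(|\+T|/\log|\+T|)$ inapproximability under $\NP\neq\ZPP$. The only place where I foresee needing real care in writing up is the backtracking argument: the combinatorics of the gadget, especially the parallel $x$- and $y$-halves forced to carry symmetric flow by the pair of trap terminals at each gadget, must be set up so that the degree-counting contradiction lands cleanly at every layer of the backtrack.
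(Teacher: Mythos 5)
Your proposal follows the paper's own argument essentially verbatim: completeness by routing one path per degree-one group vertex using the matching property of $\+E_m$, soundness by backtracking the mandatory $(r,\widetilde{v}_j)$-path and using the degree-one trap vertices $x^{ab}_{ij5},y^{ab}_{ij5}$ together with the degree-$3$ junction vertices to force a cover path, and then composing the cost-preserving correspondence with \Cref{cor:T-polylog-hardness} via $|\+T|=\Delta$. This matches the paper's proof, so no further comparison is needed.
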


\paragraph*{Acknowledgment}

We thanks anonymous referees for useful comments. 

Bundit Laekhanukit is supported by the 1000-talent plan award by the Chinese government, supported by Science and Technology Innovation 2030 –- “New Generation of Artificial Intelligence” Major Project No.(2018AAA0100903), NSFC grant 61932002, Program for Innovative Research Team of Shanghai University of Finance and Economics (IRTSHUFE) and the Fundamental Research Funds for the Central Universities. 

\bibliographystyle{alpha}
\bibliography{ref}

\newcommand{\etalchar}[1]{$^{#1}$}
\begin{thebibliography}{HKK{\etalchar{+}}07}

\bibitem[Bar96]{Bartal96}
Yair Bartal.
\newblock Probabilistic approximations of metric spaces and its algorithmic
  applications.
\newblock In {\em 37th Annual Symposium on Foundations of Computer Science,
  {FOCS} '96, Burlington, Vermont, USA, 14-16 October, 1996}, pages 184--193.
  {IEEE} Computer Society, 1996.

\bibitem[BGRS13]{ByrGRTS2013}
Jaros\l{}aw Byrka, Fabrizio Grandoni, Thomas Rothvoss, and Laura Sanit\`{a}.
\newblock Steiner tree approximation via iterative randomized rounding.
\newblock {\em J. ACM}, 60(1), February 2013.

\bibitem[CCC{\etalchar{+}}99]{ChaCCDGGL1999}
Moses Charikar, Chandra Chekuri, To-Yat Cheung, Zuo Dai, Ashish Goel, Sudipto
  Guha, and Ming Li.
\newblock Approximation algorithms for directed steiner problems.
\newblock {\em Journal of Algorithms}, 33(1):73--91, 1999.

\bibitem[CCGG98]{CharikarCGG98}
Moses Charikar, Chandra Chekuri, Ashish Goel, and Sudipto Guha.
\newblock Rounding via trees: Deterministic approximation algorithms for group
  steiner trees and \emph{k}-median.
\newblock In Jeffrey~Scott Vitter, editor, {\em Proceedings of the Thirtieth
  Annual {ACM} Symposium on the Theory of Computing, Dallas, Texas, USA, May
  23-26, 1998}, pages 114--123. {ACM}, 1998.

\bibitem[CDE{\etalchar{+}}18]{ChalermsookDELV18}
Parinya Chalermsook, Syamantak Das, Guy Even, Bundit Laekhanukit, and Daniel
  Vaz.
\newblock Survivable network design for group connectivity in low-treewidth
  graphs.
\newblock In Eric Blais, Klaus Jansen, Jos{\'{e}} D.~P. Rolim, and David
  Steurer, editors, {\em Approximation, Randomization, and Combinatorial
  Optimization. Algorithms and Techniques, {APPROX/RANDOM} 2018, August 20-22,
  2018 - Princeton, NJ, {USA}}, volume 116 of {\em LIPIcs}, pages 8:1--8:19.
  Schloss Dagstuhl - Leibniz-Zentrum f{\"{u}}r Informatik, 2018.

\bibitem[CDLV17]{ChalermsookDLV17}
Parinya Chalermsook, Syamantak Das, Bundit Laekhanukit, and Daniel Vaz.
\newblock Beyond metric embedding: Approximating group steiner trees on bounded
  treewidth graphs.
\newblock In Philip~N. Klein, editor, {\em Proceedings of the Twenty-Eighth
  Annual {ACM-SIAM} Symposium on Discrete Algorithms, {SODA} 2017, Barcelona,
  Spain, Hotel Porta Fira, January 16-19}, pages 737--751. {SIAM}, 2017.

\bibitem[CEK06]{ChekuriEK06}
Chandra Chekuri, Guy Even, and Guy Kortsarz.
\newblock A greedy approximation algorithm for the group steiner problem.
\newblock {\em Discret. Appl. Math.}, 154(1):15--34, 2006.

\bibitem[CGL14]{chalermsook2014survivable}
Parinya Chalermsook, Fabrizio Grandoni, and Bundit Laekhanukit.
\newblock On survivable set connectivity.
\newblock In {\em Proceedings of the Twenty-Sixth Annual ACM-SIAM Symposium on
  Discrete Algorithms}, pages 25--36. SIAM, 2014.

\bibitem[CGL15]{ChalermsookGL15}
Parinya Chalermsook, Fabrizio Grandoni, and Bundit Laekhanukit.
\newblock On survivable set connectivity.
\newblock In Piotr Indyk, editor, {\em Proceedings of the Twenty-Sixth Annual
  {ACM-SIAM} Symposium on Discrete Algorithms, {SODA} 2015, San Diego, CA, USA,
  January 4-6, 2015}, pages 25--36. {SIAM}, 2015.

\bibitem[CHK11]{CharikarHK11}
Moses Charikar, MohammadTaghi Hajiaghayi, and Howard~J. Karloff.
\newblock Improved approximation algorithms for label cover problems.
\newblock {\em Algorithmica}, 61(1):190--206, 2011.

\bibitem[CLNV14]{CheLNV2014}
Joseph Cheriyan, Bundit Laekhanukit, Guyslain Naves, and Adrian Vetta.
\newblock Approximating rooted steiner networks.
\newblock {\em ACM Transactions on Algorithms (TALG)}, 11(2):1--22, 2014.

\bibitem[CLWZ20]{ChaLWZ2020}
Chun-Hsiang Chan, Bundit Laekhanukit, Hao-Ting Wei, and Yuhao Zhang.
\newblock Polylogarithmic approximation algorithm for k-connected directed
  steiner tree on quasi-bipartite graphs.
\newblock In {\em Approximation, Randomization, and Combinatorial Optimization.
  Algorithms and Techniques (APPROX/RANDOM 2020)}, volume 176, pages
  63:1--63:20. Schloss Dagstuhl--Leibniz-Zentrum f{\"u}r Informatik, 2020.

\bibitem[DM18]{DinM18}
Irit Dinur and Pasin Manurangsi.
\newblock Eth-hardness of approximating 2-csps and directed steiner network.
\newblock In Anna~R. Karlin, editor, {\em 9th Innovations in Theoretical
  Computer Science Conference, {ITCS} 2018, January 11-14, 2018, Cambridge, MA,
  {USA}}, volume~94 of {\em LIPIcs}, pages 36:1--36:20. Schloss Dagstuhl -
  Leibniz-Zentrum f{\"{u}}r Informatik, 2018.

\bibitem[DS14]{DinS14}
Irit Dinur and David Steurer.
\newblock Analytical approach to parallel repetition.
\newblock In David~B. Shmoys, editor, {\em Symposium on Theory of Computing,
  {STOC} 2014, New York, NY, USA, May 31 - June 03, 2014}, pages 624--633.
  {ACM}, 2014.

\bibitem[Fei98]{Feige98}
Uriel Feige.
\newblock A threshold of ln \emph{n} for approximating set cover.
\newblock {\em J. {ACM}}, 45(4):634--652, 1998.

\bibitem[FJW06]{FleJW2006}
Lisa Fleischer, Kamal Jain, and David~P. Williamson.
\newblock Iterative rounding 2-approximation algorithms for minimum-cost vertex
  connectivity problems.
\newblock {\em Journal of Computer and System Sciences}, 72(5):838--867, 2006.
\newblock Special Issue on FOCS 2001.

\bibitem[FKM16]{FriKS2016}
Zachary Friggstad, Jochen K{\"o}nemann, and Shadravan Mohammad.
\newblock {A Logarithmic Integrality Gap Bound for Directed Steiner Tree in
  Quasi-bipartite Graphs }.
\newblock In Rasmus Pagh, editor, {\em 15th Scandinavian Symposium and
  Workshops on Algorithm Theory (SWAT 2016)}, volume~53 of {\em Leibniz
  International Proceedings in Informatics (LIPIcs)}, pages 3:1--3:11,
  Dagstuhl, Germany, 2016. Schloss Dagstuhl--Leibniz-Zentrum fuer Informatik.

\bibitem[Fra09]{Frank09}
Andr{\'{a}}s Frank.
\newblock Rooted k-connections in digraphs.
\newblock {\em Discret. Appl. Math.}, 157(6):1242--1254, 2009.

\bibitem[FRT04]{FakcharoenpholRT04}
Jittat Fakcharoenphol, Satish Rao, and Kunal Talwar.
\newblock A tight bound on approximating arbitrary metrics by tree metrics.
\newblock {\em J. Comput. Syst. Sci.}, 69(3):485--497, 2004.

\bibitem[FT88]{FrankT88}
Andr{\'{a}}s Frank and {\'{E}}va Tardos.
\newblock Generalized polymatroids and submodular flows.
\newblock {\em Math. Program.}, 42(1-3):489--563, 1988.

\bibitem[GKR00]{GargKR00}
Naveen Garg, Goran Konjevod, and R.~Ravi.
\newblock A polylogarithmic approximation algorithm for the group steiner tree
  problem.
\newblock {\em J. Algorithms}, 37(1):66--84, 2000.

\bibitem[GKR10]{GuptaKR10}
Anupam Gupta, Ravishankar Krishnaswamy, and R.~Ravi.
\newblock Tree embeddings for two-edge-connected network design.
\newblock In Moses Charikar, editor, {\em Proceedings of the Twenty-First
  Annual {ACM-SIAM} Symposium on Discrete Algorithms, {SODA} 2010, Austin,
  Texas, USA, January 17-19, 2010}, pages 1521--1538. {SIAM}, 2010.

\bibitem[GL17]{GraL2017}
Fabrizio Grandoni and Bundit Laekhanukit.
\newblock Surviving in directed graphs: a quasi-polynomial-time polylogarithmic
  approximation for two-connected directed steiner tree.
\newblock In {\em Proceedings of the 49th Annual ACM SIGACT Symposium on Theory
  of Computing}, pages 420--428, 2017.

\bibitem[GLL19]{GraLL2019}
Fabrizio Grandoni, Bundit Laekhanukit, and Shi Li.
\newblock O ($\log^2 k/\log \log k$)-approximation algorithm for directed
  {S}teiner tree: a tight quasi-polynomial-time algorithm.
\newblock In {\em Proceedings of the 51st Annual ACM SIGACT Symposium on Theory
  of Computing}, pages 253--264, 2019.

\bibitem[GN20]{GhugeN20}
Rohan Ghuge and Viswanath Nagarajan.
\newblock Quasi-polynomial algorithms for submodular tree orienteering and
  other directed network design problems.
\newblock In Shuchi Chawla, editor, {\em Proceedings of the 2020 {ACM-SIAM}
  Symposium on Discrete Algorithms, {SODA} 2020, Salt Lake City, UT, USA,
  January 5-8, 2020}, pages 1039--1048. {SIAM}, 2020.

\bibitem[GORZ12]{GoeORZ2012}
Michel~X Goemans, Neil Olver, Thomas Rothvo{\ss}, and Rico Zenklusen.
\newblock Matroids and integrality gaps for hypergraphic steiner tree
  relaxations.
\newblock In {\em Proceedings of the forty-fourth annual ACM symposium on
  Theory of computing}, pages 1161--1176. SIAM, 2012.

\bibitem[HF16]{HibF2016}
Tomoya Hibi and Toshihiro Fujito.
\newblock Multi-rooted greedy approximation of directed steiner trees with
  applications.
\newblock {\em Algorithmica}, 74(2):778--786, 2016.

\bibitem[HJ06]{HajJ06}
Mohammad~Taghi Hajiaghayi and Kamal Jain.
\newblock The prize-collecting generalized steiner tree problem via a new
  approach of primal-dual schema.
\newblock In {\em Proceedings of the Seventeenth Annual {ACM-SIAM} Symposium on
  Discrete Algorithms, {SODA} 2006, Miami, Florida, USA, January 22-26, 2006},
  pages 631--640. {ACM} Press, 2006.

\bibitem[HK03]{HalperinK03}
Eran Halperin and Robert Krauthgamer.
\newblock Polylogarithmic inapproximability.
\newblock In Lawrence~L. Larmore and Michel~X. Goemans, editors, {\em
  Proceedings of the 35th Annual {ACM} Symposium on Theory of Computing, June
  9-11, 2003, San Diego, CA, {USA}}, pages 585--594. {ACM}, 2003.

\bibitem[HKK{\etalchar{+}}07]{HalperinKKSW07}
Eran Halperin, Guy Kortsarz, Robert Krauthgamer, Aravind Srinivasan, and Nan
  Wang.
\newblock Integrality ratio for group steiner trees and directed steiner trees.
\newblock {\em {SIAM} J. Comput.}, 36(5):1494--1511, 2007.

\bibitem[Jai01]{jai2001}
Kamal Jain.
\newblock A factor 2 approximation algorithm for the generalized steiner
  network problem.
\newblock {\em Combinatorica}, 21(1):39--60, 2001.

\bibitem[KKL04]{KortsarzKL04}
Guy Kortsarz, Robert Krauthgamer, and James~R. Lee.
\newblock Hardness of approximation for vertex-connectivity network design
  problems.
\newblock {\em {SIAM} J. Comput.}, 33(3):704--720, 2004.

\bibitem[KKN12]{KhandekarKN12}
Rohit Khandekar, Guy Kortsarz, and Zeev Nutov.
\newblock Approximating fault-tolerant group-steiner problems.
\newblock {\em Theor. Comput. Sci.}, 416:55--64, 2012.

\bibitem[KV94]{KhullerV94}
Samir Khuller and Uzi Vishkin.
\newblock Biconnectivity approximations and graph carvings.
\newblock {\em J. {ACM}}, 41(2):214--235, 1994.

\bibitem[Lae14]{Lae2014}
Bundit Laekhanukit.
\newblock Parameters of two-prover-one-round game and the hardness of
  connectivity problems.
\newblock In {\em Proceedings of the twenty-fifth annual ACM-SIAM symposium on
  Discrete algorithms}, pages 1626--1643. SIAM, 2014.

\bibitem[Lae16]{Lae2016}
Bundit Laekhanukit.
\newblock Approximating directed steiner problems via tree embedding.
\newblock In {\em 43rd International Colloquium on Automata, Languages, and
  Programming (ICALP 2016)}. Schloss Dagstuhl-Leibniz-Zentrum fuer Informatik,
  2016.

\bibitem[Man19]{Man2019}
Pasin Manurangsi.
\newblock A note on degree vs gap of min-rep label cover and improved
  inapproximability for connectivity problems.
\newblock {\em Information Processing Letters}, 145:24--29, 2019.

\bibitem[MRS21]{ManRS21}
Pasin Manurangsi, Aviad Rubinstein, and Tselil Schramm.
\newblock The strongish planted clique hypothesis and its consequences.
\newblock In James~R. Lee, editor, {\em 12th Innovations in Theoretical
  Computer Science Conference, {ITCS} 2021, January 6-8, 2021, Virtual
  Conference}, volume 185 of {\em LIPIcs}, pages 10:1--10:21. Schloss Dagstuhl
  - Leibniz-Zentrum f{\"{u}}r Informatik, 2021.

\bibitem[Nut12]{Nut2012}
Zeev Nutov.
\newblock Approximating minimum-cost connectivity problems via uncrossable
  bifamilies.
\newblock {\em ACM Transactions on Algorithms (TALG)}, 9(1):1--16, 2012.

\bibitem[Nut21]{Nut2021}
Zeev Nutov.
\newblock On rooted k-connectivity problems in quasi-bipartite digraphs.
\newblock In {\em International Computer Science Symposium in Russia}, pages
  339--348. Springer, 2021.

\bibitem[Zel97]{zelikovsky1997series}
Alexander Zelikovsky.
\newblock A series of approximation algorithms for the acyclic directed steiner
  tree problem.
\newblock {\em Algorithmica}, 18(1):99--110, 1997.

\end{thebibliography}

\appendix
\section{Inapproximability for $k$-ST}
\label{sec:hardness-undirected}

There is a natural variant of $k$-DST where undirected graphs are considered.
In this case, the edge/vertex-disjoint versions are no longer equivalent to the two versions of $k$-DST.
Jain~\cite{jai2001} gave a $2$-approximation algorithm for the edge-disjoint version while the vertex-disjoint case is at least as hard as the label cover problem, which admits no $2^{\log^{1-\eps}\abs{V(G)}}$-approximation algorithm for any $\eps>0$, unless $\NP=\ZPP$.
Here we consider the vertex-disjoint version, namely the {\em single-source $k$-vertex-connected Steiner tree} problem ($k$-ST), formally defined as follows.
An input instance is of the form $(k,G,r,T)$ where $k\in \mathbb{Z}_{\ge 1}$ is the connectivity requirement, $G=(V,E)$ is a weighted undirected graph with a weight (or cost) function $\cost: E\to \mathbb{Q}_{\ge 0}$, the vertex $r\in V$ is called root and $T\subseteq V$ is a set of terminals.
The problem is to find a subgraph $H=(V,F)$ of minimum cost defined by  $\cost(H)=\sum_{e\in F}\cost(e)$ such that there are $k$ openly vertex-disjoint paths in $H$ from $r$ to the terminal $t$ for each $t\in T$.

We give a reduction from the label cover instance $(\+G=(\+U,\+V,\+E),\Sigma,\set{\pi_{uv}}_{(u,v)\in \+E})$ to a $k$-ST instance $(k, G=(V,E), r, T)$.
The construction is similar to that for $k$-DST, with some necessary adaptions.

\paragraph*{Base Construction.}
Let $G$ be a graph with a single vertex, the root $r$.
For each vertex $u_i\in \+U$, add to $G$ a set of vertices $A_i=\set{u_i^a: a\in \Sigma}$; we join $r$ to each vertex $u_i^a\in A$ by an edge $(r,u_i^a)$ of cost one.
For each vertex $v\in \+V_j$, we add to $G$ a set of vertices $B_j=\set{v_j^b: b\in \Sigma}$; we join each vertex $v_j^b\in B_j$ to $v_j$ by an edge $(v_j^b, v_j)$ of cost one.
For each edge $(u_i,v_j)$ of $\+G$ and $a,b\in \Sigma$, we add to $G$ zero-cost edges $(u_i^a, w_{ij}^{ab})$ and $(w_{ij}^{ab}, v_j^b)$, connecting $u_i^a\in A_i$ with $v_j^b\in B_j$ if $\pi_{u_iv_j}(a)=b$.

\paragraph*{Final Construction.}
We apply \Cref{lem:matching} to partition $\+E$ into $\Delta$ matchings $\+E_1,\+E_2,\ldots,\+E_\Delta$.
For each matching $\+E_m$, we add to $G$ a terminal $t_m$ and join each $v_j\in V(\+E_m)\cap \+V$ to $t_m$ by a zero-cost edge $(v_j,t_m)$.
This creates a set of terminals $T=\set{t_1,t_2,\ldots, t_\Delta}$.
Next, we add some \emph{padding arcs}.
For each pair of edges $(u_i^a, w_{ij}^{ab})$ and $(w_{ij}^{ab}, v_j^b)$ in $G$, we add to $G$ a new vertex $x_{ij}^{ab}$ and two zero-cost edges $(r,x_{ij}^{ab})$ and $(x_{ij}^{ab}, w_{ij}^{ab})$.
For $1\le m\le \Delta$ and $(u_i,v_j)\in \+E$, if $(u_i,v_j)\in \+E_m$, then we add a zero-cost edge $(x_{ij}^{ab}, t_m)$ for each $a,b\in \Sigma$ such that $\pi_{u_iv_j}(a)=b$; otherwise, we add a zero-cost edge $(w_{ij}^{ab}, t_m)$.
Finally, we set $k=\max_{1\le m\le \Delta} \deg_G(t_m)$ and add $k-\deg_G(t_m)$ copies of a zero-cost edge $(r,t_m)$ for each $t_m$.

\paragraph*{Completeness.}

Given a feasible multilabeling $\sigma$ we show that there is a corresponding feasible subgraph $H=(V,F)$ of $G=(V,E)$ such that $\cost(\sigma)=\cost(H)$.
The set $F$ consists of three types of edges: 1) all zero-cost edges in $G$; 2) the one-cost edges $(r, u_i^a)$ for each $u_i\in \+U$ and $a\in \sigma(u_i)$; the one-cost edges $(v_j^b,v_j)$ for each $v_j\in \+V$ and $b\in \sigma(v_j)$.
Clearly, $\cost(H)=\cost(\sigma)$ and the definition of $F$ induces an injective mapping.

We prove that there are $k$ openly vertex-disjoint paths $P$ from $r$ to each terminal $t_m$.
The degree of $t_m$ in $G$ is exactly $k$, so each edge is incident to $t_m$ must be contained in $P$.
These edges fall into the following categories.

\begin{itemize}
  \item $(r,t_m)$: the edge itself is a path.
    Clearly, the paths in the class are openly vertex-disjoint with any other path.
  \item $(x_{ij}^{ab},t_m)$: We choose the path $(r,x_{ij}^{ab},t_m)$.
    There is at most one such edge for fixed $i,j,a,b$ so the paths in this class are openly vertex-disjoint.
  \item $(w_{ij}^{ab},t_m)$: We choose the path $(r,x_{ij}^{ab},w_{ij}^{ab},t_m)$.
    For the same reason as in the previous case, the paths in this class are openly vertex-disjoint.
    They are also openly vertex-disjoint with previous paths because for each fixed $i,j,a,b$ the final construction guarantees that $(w_{ij}^{ab},t_m)$ and $(x_{ij}^{ab},t_m)$ cannot be contained in $P$ at the same time.
  \item $(v_j,t_m)$: There is a unique $i$ such that $(u_i,v_j)\in \+E_m$.
    Since $\sigma$ is feasible, there exist $a\in \sigma(u_i),b\in \sigma(v_j)$ that $\pi_{u_i w_j}(a)=b$.
    Thus, we know that $(r,u_i^a),(v_j^b, v_j)\in E$, and we choose the path $(r,u_i^a,w_{ij}^{ab},v_j^b,v_j,t_m)$, which dose not share any vertex other than $r$ and $t_m$ with previous paths, for $(w_{ij}^{ab},t_m)\not\in E$.
    The fact that $\+E_m$ is a matching guarantee the vertex-disjointness of paths in this class.
\end{itemize}

\paragraph*{Soundness.}

Given a feasible subgraph $H=(V,F)$ (that contains all zero-cost edges) of the $k$-ST instance $(k,G,r,T)$, we show that there is a corresponding feasible multilabeling $\sigma$ of the label cover instance $(\+G,\Sigma,\pi)$ such that $\cost(\sigma)=\cost(H)$.
The multilabeling of $\sigma$ is specified by choosing the one-cost edges in $H$, i.e., by setting $\sigma(u_i)$ as $\set{a\in \Sigma: (r,u_i^a)\in F}$ for $u_i\in \+U$ and setting $\sigma(v_j)$ as $\set{b\in \Sigma: (v_j^b, v_j)\in F}$ for $v_j\in \+V$.
Clearly, $\cost(\sigma) = \cost(H)$ and the definition of $\sigma$ induces an injective mapping.

Recall that $\+E$ is partitioned into $\Delta$ matchings $\+E_1,\+E_2,\ldots,\+E_\Delta$.
For an edge $(u_i,v_j)\in \+E$, there is a unique $m$ such that $(u_i,v_j)\in \+E_m$.
Let $P$ be any set of $k$ openly vertex-disjoint paths in $F$ from $r$ to $t_m$.
The degree of $t_m$ in $G$ is exactly $k$, so each edge incident to $t_m$ must be contained in $P$.
In particular, $(v_j,t_m)\in P$.
Then we claim that if a path $p\in P$ contains the edge $(v_j,t_m)$, then it cannot contain the vertex $w_{i'j}^{a'b}$ for any $i'\neq i$ and $a'\in \pi_{u_{i'} v_j}^{-1}(b)$.
Since $\+E_m$ is a matching, we know that $(u_{i'},v_j)\not\in \+E_m$.
Thus, $(w_{i'j}^{a'b},t_m)\in E$, which implies that there is already a path $p'$ that uses $(w_{i'j}^{a'b},t_m)$.
The vertex-disjointness ensures that $p'$ does not use the vertex $w_{i'j}^{a'b}$.

Repeating the same arguments, the path $p$ cannot use any vertex from the set $\set{x_{ij}^{ab}: a\in \Sigma}$ and $\set{w_{ij'}^{ab'}: j'\neq j, \pi_{u_i v_{j'}}^{-1}(b')\owns a}$.
By backtracking $p$ from the last vertex $t_m$, we know that it must be $(r,u_i^a,,w_{ij}^{ab},v_j^b,v_j,t_m)$ for some $a,b\in \Sigma$ such that $\pi_{u_iv_j}(a)=b$.
Therefore, the edge $(u_i,v_j)$ is covered by $\sigma$ via the label pair $(a,b)$.

\paragraph*{Hardness Gap.}

The size of the construction is clearly polynomial in the size of the label cover instance.
Furthermore, the reduction guarantees $|T|=\Delta$.
Plugging in \Cref{cor:T-polylog-hardness}, we obtain the following hardness result for the $k$-ST problem.

\begin{theorem} \label{thm:main-kst}
  For $k> |T|$, unless $\NP=\ZPP$, it is hard to approximate the $k$-ST problem to within a factor of $\Omega\tp{|T|/\log|T|}$.
\end{theorem}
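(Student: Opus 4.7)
The plan is to mirror the reduction for $k$-DST in \Cref{sec:hardness-terminals}, starting from a label cover instance $(\+G=(\+U,\+V,\+E),\Sigma,\pi)$ of maximum degree $\Delta$ guaranteed by \Cref{cor:T-polylog-hardness}, and construct a $k$-ST instance $(k,G=(V,E),r,T)$ in polynomial time with $|T|=\Delta$ and with an (injective, cost-preserving) one-to-one correspondence between feasible multilabelings $\sigma$ and feasible subgraphs $H$ containing all zero-cost edges. Since the label cover instance is inapproximable within $\Omega(\Delta/\log \Delta)$, the theorem follows immediately.

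First I would set up a base construction analogous to that for $k$-DST, but with undirected edges: a root $r$, sets $A_i=\{u_i^a\}$ and $B_j=\{v_j^b\}$, unit-cost edges $(r,u_i^a)$ and $(v_j^b,v_j)$, and for every $(u_i,v_j)\in \+E$ with $\pi_{u_iv_j}(a)=b$, an internal vertex $w_{ij}^{ab}$ joined to $u_i^a$ and $v_j^b$ by zero-cost edges. Next, I would partition $\+E$ into $\Delta$ matchings $\+E_1,\ldots,\+E_\Delta$ via \Cref{lem:matching} and, for each matching $\+E_m$, add a single terminal $t_m$ joined by zero-cost edges to the vertices $v_j$ with $v_j\in V(\+E_m)\cap \+V$. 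The crucial modification for the vertex-connectivity setting is the padding gadget: for every triple $(i,j,a,b)$ with $\pi_{u_iv_j}(a)=b$, introduce a fresh vertex $x_{ij}^{ab}$ with zero-cost edges $(r,x_{ij}^{ab})$ and $(x_{ij}^{ab},w_{ij}^{ab})$; then for each $t_m$, add a zero-cost edge $(x_{ij}^{ab},t_m)$ if $(u_i,v_j)\in \+E_m$ and $(w_{ij}^{ab},t_m)$ otherwise. Finally, set $k=\max_m \deg_G(t_m)$ and saturate every $t_m$ with $k-\deg_G(t_m)$ parallel zero-cost edges $(r,t_m)$ so that every terminal has degree exactly $k$.

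For completeness, I would take $F$ to be all zero-cost edges together with the one-cost edges selected by $\sigma$, then exhibit $k$ openly vertex-disjoint $(r,t_m)$-paths by dispatching each of the $k=\deg_G(t_m)$ edges incident to $t_m$ into its own path: use the trivial path for each $(r,t_m)$; the two-hop path $(r,x_{ij}^{ab},t_m)$ for $(x_{ij}^{ab},t_m)$; the three-hop $(r,x_{ij}^{ab},w_{ij}^{ab},t_m)$ for $(w_{ij}^{ab},t_m)$; and for the unique $(u_i,v_j)\in \+E_m$ entering $t_m$ through $(v_j,t_m)$, the cover path $(r,u_i^a,w_{ij}^{ab},v_j^b,v_j,t_m)$ where $(a,b)$ is furnished by feasibility of $\sigma$. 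The matching property of $\+E_m$ keeps the cover paths mutually vertex-disjoint, and the two auxiliary padding cases are disjoint by construction because for any $(i,j,a,b)$ exactly one of $(x_{ij}^{ab},t_m),(w_{ij}^{ab},t_m)$ is present.

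For soundness, given a feasible $H$, define $\sigma(u_i)=\{a:(r,u_i^a)\in F\}$ and $\sigma(v_j)=\{b:(v_j^b,v_j)\in F\}$, which preserves cost. Fix $(u_i,v_j)\in \+E_m$ and let $P$ be any $k$ openly vertex-disjoint $(r,t_m)$-paths in $H$; since $\deg_G(t_m)=k$, each edge at $t_m$ lies on its own path, including $(v_j,t_m)$. The main obstacle I expect is to show that the path $p\in P$ using $(v_j,t_m)$ must in fact be the intended cover path $(r,u_i^a,w_{ij}^{ab},v_j^b,v_j,t_m)$ rather than some illegal detour (the undirected setting permits $p$ to wander into vertices like $w_{i'j}^{a'b}$ with $u_{i'}\neq u_i$, or $x_{ij'}^{ab'}$ with $j'\neq j$, before returning). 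The key observation, which I would carry out carefully, is that for any $(u_{i'},v_j)\notin \+E_m$ the vertex $w_{i'j}^{a'b}$ is already consumed by another path in $P$ (the one exiting $t_m$ through $(w_{i'j}^{a'b},t_m)$), so openly vertex-disjointness forbids $p$ from touching it; similarly every $x_{ij}^{ab}$ is already used by a distinct path. Backtracking $p$ from $t_m$ then forces it through $v_j,v_j^b,w_{ij}^{ab},u_i^a,r$ for some $a,b$ with $\pi_{u_iv_j}(a)=b$, so $a\in\sigma(u_i)$ and $b\in\sigma(v_j)$, establishing that $\sigma$ covers $(u_i,v_j)$. Combining, the polynomial reduction transfers the $\Omega(\Delta/\log\Delta)$ hardness of label cover to an $\Omega(|T|/\log|T|)$ hardness for $k$-ST with $k=\Theta(|T|\cdot |\Sigma|^2)>|T|$.
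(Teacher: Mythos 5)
Your proposal is correct and follows essentially the same route as the paper's own proof in the appendix: the identical base construction with subdivision vertices $w_{ij}^{ab}$, the same matching-based terminals, the same padding vertices $x_{ij}^{ab}$ with the $(x_{ij}^{ab},t_m)$ versus $(w_{ij}^{ab},t_m)$ dichotomy, degree saturation by parallel $(r,t_m)$ edges, and the same completeness/soundness arguments where vertex-disjointness forces the cover path because the padding vertices are consumed by other paths. The only nitpick is your closing estimate $k=\Theta(|T|\cdot|\Sigma|^2)$, which is not accurate in general (the terminal degree scales with $|\Sigma|\cdot|\mathcal{E}|$), but all that is needed is $k>|T|$, which holds.
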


\section{Hardness under Strongish Planted Clique Hypothesis}
\label{sec:SPCH-hardness}

In this section, we discuss the hardness of $k$-DST under the Strongish Planted Clique Hypothesis (SPCH), which asserts that there exists no $n^{o(\log n)}$-time approximation algorithm that solves the {\em planted $k$-clique} problem. Note that here we use $k$ to mean the size of a subgraph rather than the connectivity requirement in the $k$-DST problem.

To be formal, the planted $k$-clique problem asks an algorithm to distinguish between the two cases of $n$-vertex graphs: (1) a uniform random graph, and (2) a uniform random graph with an added $k$-clique. The SPCH asserts that there exists no bounded-error probabilistic polynomial time algorithm that can distinguish the two cases in $O(n^{o(\log n)})$-time.
Under this complexity assumption, Manurangsi, Rubinstein and Schramm showed that a $2$-CSP, particularly, the {\em densest $k$-subgraph} problem (D$k$S) admits no polynomial-time $o(k)$-approximation algorithm.
To be precise, in the D$k$S problem, we are given a graph $G=(V,E)$ and an integer $k>0$. The goal is to find a subset of $k$ vertices that spans the maximum number of edges. The following theorem was proved in \cite{ManRS21}.

\begin{theorem} [D$k$S Hardness under SPCH]
\label{thm:dks-SPCH-hardness}
Assuming the Strongish Planted Clique Hypothesis, there is no $f(k)\cdot\poly(n)$-time algorithm that can approximate the densest $k$-subgraph problem on $n$-vertex graphs to within a factor $o(k)$ for any function $f$ depending only on $k$. Furthermore, this holds even in the perfect completeness case where the input graph is promised to contain a $k$-clique.
\end{theorem}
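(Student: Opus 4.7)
The plan is to prove the theorem via a gap-preserving reduction from the planted $k$-clique problem, which is the problem assumed hard by SPCH. Given an $n$-vertex input graph $G$ that is either (YES case) a sample from $G(n,1/2)$ with a planted $k$-clique for $k=\Theta(\log n)$ or (NO case) a sample from $G(n,1/2)$, I would construct a D$k'$S instance $G'$ on $n' = n^{O(\ell)}$ vertices with parameter $k'$ to be tuned, such that the YES case yields a $k'$-clique in $G'$ (perfect completeness) and the NO case forces every $k'$-vertex subgraph of $G'$ to span at most an $o(1/k')$ fraction of $\binom{k'}{2}$ edges.

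The naive reduction $G' := G$, $k' := k$ preserves perfect completeness, but in the NO case a random $k$-subset of $G(n,1/2)$ spans $(1/2)\binom{k}{2}$ edges in expectation, and a standard Chernoff plus union bound shows that the densest $k$-subgraph has only $(1/2+o(1))\binom{k}{2}$ edges; this is only a factor-$2$ gap, far short of the required $\omega(k')$. To amplify, the plan is a tensor-product-style construction: let the vertex set of $G'$ be $\binom{V(G)}{\ell}$ for a small parameter $\ell$, and connect disjoint $\ell$-subsets $S_1,S_2$ by an edge whenever $S_1\cup S_2$ spans a $2\ell$-clique in $G$. In the YES case, the $\binom{k}{\ell}$ many $\ell$-subsets of the planted clique form a clique of size $k' := \binom{k}{\ell}$ in $G'$. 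In the NO case, controlling the densest $k'$-subgraph of $G'$ reduces to bounding the number of $2\ell$-cliques that a collection of $k'$ many $\ell$-subsets can induce in $G(n,1/2)$, and a second-moment calculation on subgraph counts in the random graph gives a decay of $2^{-\Omega(\ell^2)}$. Choosing $\ell=\Theta(\log\log n)$ pushes the gap to $\omega(k')$ while keeping $k'$ and $n'$ quasi-polynomially bounded in $n$.

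The main obstacle is the soundness analysis. Two $\ell$-subsets that share most of their ground-set elements are highly correlated as vertices of $G'$, so a naive union bound over all $\binom{n'}{k'}$ candidate subsets is much too loose. The fix I would pursue is to partition the $k'$-vertex subsets of $V(G')$ by the size $s$ of their union in $V(G)$ and apply a Chernoff-type concentration bound on $2\ell$-clique counts within the induced subgraph of $G(n,1/2)$ on those $s$ vertices, then union-bound class by class; the key gain is that for each fixed support size $s$ the effective number of candidate subsets is vastly smaller than $\binom{n'}{k'}$. Once the gap is established, the parameter bookkeeping is routine: for $\ell = \Theta(\log\log n)$ we have $n' = n^{O(\log\log n)}$ and $k' = \polylog(n)$, so any hypothetical $f(k')\cdot\poly(n')$-time $o(k')$-approximation algorithm for D$k$S would distinguish planted $k$-clique from $G(n,1/2)$ in time $f(\polylog(n))\cdot n^{O(\log\log n)} = n^{o(\log n)}$ (for any sub-exponential $f$), contradicting SPCH.
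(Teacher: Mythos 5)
Your construction, as written, is internally inconsistent, and in the only variant where it can be repaired its soundness provably cannot deliver the gap the theorem needs. If edges of $G'$ are placed only between \emph{disjoint} $\ell$-subsets, then two overlapping $\ell$-subsets of the planted clique are non-adjacent, so the $\binom{k}{\ell}$ subsets of the planted clique do \emph{not} form a clique — the largest clique they yield has size only $\lfloor k/\ell\rfloor$ — and perfect completeness with $k'=\binom{k}{\ell}$ fails. If instead you drop disjointness so that completeness holds, soundness collapses: $G(n,1/2)$ w.h.p.\ contains a clique of size $m=(2-o(1))\log_2 n$, and its $\ell$-subsets form a clique in $G'$ of size $\binom{m}{\ell}\approx k'\cdot 2^{-\Theta(\ell)}$ (since your planted clique size $k=\Theta(\log n)$ is only a constant factor larger than $m$). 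Hence even in the random case the densest $k'$-subgraph has at least $\binom{\binom{m}{\ell}}{2}=k'^2\,2^{-\Theta(\ell)}$ edges, so the certifiable gap is at most $2^{O(\ell)}=\polylog(n)$, whereas ruling out $o(k')$-approximation requires a gap of $\Omega(k')=(\log n)^{\Theta(\log\log n)}$. Forcing the gap up would require $\ell=\Omega(\log n)$, at which point $n'=n^{\Theta(\log n)}$ and the reduction itself already costs $n^{\Omega(\log n)}$ time, so no contradiction with SPCH can be extracted. In other words, the deferred ``second-moment plus stratified union bound'' soundness step is not a fillable technicality: the claimed gap $\omega(k')$ is false for this construction.

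Even granting a gap, your conclusion is quantitatively weaker than the statement. The theorem rules out $f(k)\cdot\poly(n)$-time $o(k)$-approximation for \emph{every} function $f$, but your hard instances carry one specific parameter value $k'(n)$ that grows with $n$ (and note $k'=\binom{k}{\ell}$ is $(\log n)^{\Theta(\log\log n)}$, not $\polylog(n)$ as your final paragraph asserts), so you only refute algorithms whose $f$ satisfies $f(k'(n))\le n^{o(\log n)}$ — your own parenthetical ``for any sub-exponential $f$'' concedes exactly this. To obtain the full statement one needs hard instances for arbitrarily slowly growing parameter values, so that for any fixed $f$ one may choose $k=k(n)$ with $f(k(n))\le\poly(n)$; a blow-up construction that inflates the parameter to $\binom{k}{\ell}$ goes in the wrong direction. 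For comparison, the paper does not prove \Cref{thm:dks-SPCH-hardness} at all — it imports it from \cite{ManRS21}, whose reduction from planted clique is designed precisely so that the D$k$S parameter is a freely chosen small quantity (with perfect completeness preserved), which is what makes the ``for any $f$'' quantifier legitimate. As it stands, your proposal establishes neither perfect completeness, nor the soundness gap, nor the universal quantification over $f$.
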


We will prove the following statement in \Cref{sec:DkS-to-LabelCover}, which gives an inapproximability result under SPCH for the (minimum) label cover problem with {\em relation constraints}. While this is not the variant of the label cover instance we defined earlier, it does not affect our hardness result presented in \Cref{sec:hardness-terminals}.

\begin{theorem} [Label Cover Hardness under SPCH]
\label{thm:Label-Cover-SPCH-hardness}
Assuming the Strongish Planted Clique Hypothesis, there is no $f(k)\cdot\poly(n)$-time algorithm that can approximate a label cover instance of size $n$ on a $(k,k)$-complete bipartite graph to within a factor $o(k)$ for any function $f$ depending only on $k$. Furthermore, this holds even in the perfect completeness case where the input graph is promised to have a multilabeling of cost $2k$ that satisfies all the constraints. In particular, there exists no FPT-approximation algorithm for the (minimum) label-cover problem parameterized by the number of vertices.
\end{theorem}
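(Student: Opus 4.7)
The plan is to reduce from the densest-$k$-subgraph problem (D$k$S), whose SPCH-based hardness is given by \Cref{thm:dks-SPCH-hardness}. First I would construct, from a D$k$S instance $(G=(V,E),k)$ in the perfect-completeness regime, a label-cover instance on the complete bipartite graph $\+G=K_{k,k}$ with sides $\+U=\{u_1,\ldots,u_k\}$ and $\+V=\{v_1,\ldots,v_k\}$, alphabet $\Sigma=V$, and relation constraints $R_{ii}=\{(a,a):a\in V\}$ on each diagonal edge (label equality) and $R_{ij}=E(G)$ on each off-diagonal edge (the label pair must be an edge of $G$). The reduction is polynomial-time and parameter-preserving in $k$, so any $f(k)\cdot\poly(n)$-time $o(k)$-approximation for label cover would yield one for D$k$S and contradict SPCH. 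For \textbf{completeness}, a $k$-clique $\{c_1,\ldots,c_k\}\subseteq V$ yields the multilabeling $\sigma(u_i)=\sigma(v_i)=\{c_i\}$ of cost exactly $2k$, and every relation is covered by construction.

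For \textbf{soundness}, given a feasible $\sigma$ of cost $c$, the goal is to extract a $k$-vertex subset of $V$ spanning enough edges of $G$ to contradict the D$k$S NO-case promise. The diagonal relations force $\sigma(u_i)\cap\sigma(v_i)\neq\emptyset$ for every $i$, so the support $T=\bigcup_i\sigma(u_i)\cup\bigcup_j\sigma(v_j)$ is a subset of $V$ of size at most $c$ whose induced subgraph must contain edges witnessing all $k(k-1)$ off-diagonal relations. A double-counting / Cauchy--Schwarz argument on the label multiplicities $d_u(a)=|\{i:a\in\sigma(u_i)\}|$ and $d_v(b)=|\{j:b\in\sigma(v_j)\}|$ (whose sums equal $c$) then lower-bounds the number of distinct edges in $G[T]$, and combining this with the D$k$S density bound forces $c$ to grow with $k$ and yields the claimed gap.

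The \textbf{hard part} is that a single edge $(a,b)\in E(G)$ together with the labeling $\sigma(u_i)=\sigma(v_i)=\{a,b\}$ satisfies every off-diagonal relation at cost only $4k$, so a naive decoding would collapse most of the D$k$S gap. I expect the multiplicity/density analysis above to recover the gap up to roughly an $\Omega(\sqrt{k})$ factor on its own; to push all the way to the full $o(k)$ inapproximability strength claimed in the theorem, the standard fallback is to apply a parallel-repetition step on the alphabet, taking $\Sigma=V(G)^t$ with coordinate-wise relations for a suitable $t=f(k)$. This preserves completeness at cost $2k$ by repeating the clique in every coordinate but amplifies the soundness cost lower bound, since witnessing multi-coordinate edge relations in a sparse $G$ is exponentially harder. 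A final check confirms that the reduction still fits within the FPT framework (parameter $k$ preserved, running time $f(k)\cdot\poly(n)$), completing the argument.
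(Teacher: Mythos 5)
There is a genuine gap, and it is exactly the one you flag but do not repair. In your construction every vertex of the label-cover instance has the full alphabet $\Sigma=V(G)$, so the cheating multilabeling $\sigma(u_i)=\sigma(v_i)=\{a,b\}$ for a single edge $(a,b)\in E(G)$ is feasible at cost $4k$ even in the NO case of D$k$S (a random graph certainly contains an edge). Hence the instance has optimum at most $4k$ regardless of the D$k$S answer, the gap is a constant factor $2$, and no ``multiplicity/Cauchy--Schwarz'' decoding can rescue it --- soundness is not a matter of analysis but is genuinely false for this reduction. Your proposed fallback of parallel repetition with alphabet $V(G)^t$ and coordinate-wise relations fails for the same reason: the labeling $\sigma(u_i)=\sigma(v_i)=\{(a,\dots,a),(b,\dots,b)\}$ again covers every diagonal (equality) and off-diagonal (edge-in-every-coordinate) constraint at cost $4k$, so repetition amplifies nothing.

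The missing idea, which is how the paper proceeds, is to \emph{randomly partition} $V(G)$ into $k$ parts $V_1,\dots,V_k$ and make $V_i$ the label set of both $u_i$ and $v_i$, with $\pi_{u_iv_j}=\{(a,b):a\in V_i,\ b\in V_j,\ ab\in E(G)\}$ for $i\neq j$ and equality on the diagonal. Disjointness of the parts kills the two-vertex cheat: covering the constraint on $(u_i,v_j)$ requires an edge of $G$ with one endpoint in $V_i$ and the other in $V_j$, so distinct pairs $\{i,j\}$ are witnessed by distinct edges, and the support $S$ of any feasible multilabeling of cost $o(k^2)$ (so $|S|\le k^2/g(k)$ with $g(k)=\omega(1)$) spans $\Omega(k^2)$ edges; a uniformly random $k$-subset of $S$ then spans $\omega(1)$ edges in expectation, contradicting the D$k$S NO-case, and enumerating the $\binom{|S|}{k}$ subsets keeps the whole argument inside $f(k)\cdot\poly(n)$ time. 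Completeness uses the event that the random partition splits the planted clique one vertex per part (probability depending only on $k$, which is acceptable for a randomized FPT reduction), giving the promised cost-$2k$ solution. Your completeness step and the overall ``reduce from \Cref{thm:dks-SPCH-hardness}'' framing are right, but without the partition step the soundness direction collapses.
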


\subsection{Densest $k$-Subgraph to Label Cover}
\label{sec:DkS-to-LabelCover}

\paragraph*{Reduction.}
There is a simple reduction from D$k$S to the label cover problem \cite{CharikarHK11}, which resemblances the reduction from D$k$S to the {\em smallest $p$-edge subgraph} problem in \cite{DinM18,ManRS21} (also, see \cite{HajJ06}). Let $G=(V,E)$ be the input instance of D$k$S. We first randomly partition the vertex set of $G$ into $k$ parts, say $V_1,\ldots,V_k$, and for each part $V_i$, we construct two vertices $u_i$ and $v_i$ in the label cover instance.
We then define a $(k,k$)-complete bipartite graph $\+G=(\+U,\+W,\+E)$ by setting $\+U=\set{u_i:i\in[k]}$, $\+V=\set{v_i:i\in[k]}$ and $\+E=\set{u_iv_j:i,j\in [k]}$.
Each subset $V_i$ becomes the label-set of the vertices $u_i$ and $v_i$. Next we add ``relation constraints'' $\pi_{u_iv_j}=\{(a,b):a\in V_i, b\in V_j: ab\in E(G)\}$, for each edge $i\neq j$, and $\pi_{u_iv_i}=\{(a,a):a\in V_i\}$, for each edge $u_iv_i$.
We say that a multilabeling $\sigma$ {\em covers} a (relation) constraint $\pi_{u_iv_j}$ if there exist $a\in\sigma(u_i),b\in\sigma(v_j)$ such that $(a,b)\in \pi_{u_iv_j}$.
Our analysis follows that of the smallest $p$-edge subgraph problem in \cite{ManRS21}. We provide the proof here for the sake of self-containedness.

\paragraph*{Completeness.} Suppose there exists a $k$-clique $C$ in $G$. Then, with probability $1/(k!)$, the partition $V_1,\ldots,V_k$ has exactly one vertex from $C$. Thus, it is not hard to see that we can choose one label for each of $v_i$ and $u_j$ so that it satisfies all the constraints. In particular, the label cover instance has a solution of cost $2k$.

\paragraph*{Soundness.} Suppose there exists no $k$-subgraph that spans $\omega(1)$ edges. Assume for contrapositive that there exists a multilabeling $\sigma$ with cost $o(k^2)$ that satisfies all the label cover constraints. We will show that there exists an algorithm that can approximates D$k$S within an $o(k)$-factor in time $f(k)\poly(n)$.

Firstly, we may assume that the multilabeling $f$ uses $k^2/g(k)$ labels for some $g(k)=\omega(1)$. Since each label corresponds to a vertex in $G$, we have a subset $S$ of vertices of $G$ of size $k^2/g(k)$. Now we uniformly at random pick a $k$-subset $T\subseteq S$. Then each edge $ab\in E(G)$ is spanned by $T$ with probability $\frac{k(k-1)}{|S|(|S|-1)}$. Since $\sigma$ satisfies all the constraints of the label cover instance, its corresponding vertex set $S$ must span at least $(k-1)^2$ edges. Thus, the expected number of edges spanned by $T$ is
\begin{align*}
\mathbb{E}_T[|E(T)|] &=\frac{k(k-1)}{|S|(|S|-1)}\cdot |E(S)|\\
 &\geq \frac{k(k-1)}{|S|(|S|-1)}\cdot(k-1)^2\\
 &\geq \frac{k(k-1)}{(k^2/g(k))((k^2/g(k))-1)}\cdot(k-1)^2\\
 &\geq \frac{(g(k))^2}{2} \quad \text{(for sufficiently large $k$)} \\
 &\geq \omega(1)
\end{align*}
This implies that there exists a $k$-subset $T\subseteq S$ that spans at least $\omega(1)$ edges of $G$, a contradiction.
Moreover, since we can enumerate all the possible subsets of size $k$ from $S$ in $O\tp{\binom{k^2/g(k)}{k}\cdot\poly(n)}$ time, the existence of any $f(k)\poly(n)$-time $o(k)$-approximation algorithms for the problem would contradict \Cref{thm:dks-SPCH-hardness} (and thus SPCH).
\qed

\end{document}